\newtheorem{theorem}{Theorem}
\newtheorem{lemma}{Lemma}
\newtheorem{problem}{Problem}
\newtheorem{definition}{Definition}
\newtheorem{corollary}{Corollary}
\DeclareMathOperator*{\argmin}{\arg\!\min}
\DeclareMathOperator*{\argmax}{\arg\!\max}
\newcommand{\mpara}[1]{\medskip\noindent{\bf{#1}}}
\newcommand{\spara}[1]{\smallskip\noindent{\bf{#1}}}
\newcommand{\reals}{\ensuremath{\mathbb{R}}\xspace}
\newcommand{\integers}{\ensuremath{\mathbb{N}}\xspace}
\newcommand{\bigO}{\ensuremath{\mathcal{O}}\xspace}
\newcommand{\np}{\ensuremath{\mathbf{NP}}\xspace}
\newcommand{\poly}{\ensuremath{\mathbf{P}}\xspace}
\newcommand{\X}{\ensuremath{X}\xspace}
\newcommand{\distance}{\ensuremath{d}\xspace}
\newcommand{\facilities}{\ensuremath{F}\xspace}
\newcommand{\nofacilities}{\ensuremath{m}\xspace}
\newcommand{\clients}{\ensuremath{C}\xspace}
\newcommand{\noclients}{\ensuremath{n}\xspace}
\newcommand{\sol}{\ensuremath{S}\xspace}
\newcommand{\optimal}{\ensuremath{O}\xspace}
\newcommand{\opt}{\ensuremath{o}\xspace}
\newcommand{\cost}{\ensuremath{\mathit{cost}}\xspace}
\newcommand{\afacility}{\ensuremath{s}\xspace}
\newcommand{\bfacility}{\ensuremath{t}\xspace}
\newcommand{\facilityi}{\ensuremath{s_i}\xspace}
\newcommand{\facilityj}{\ensuremath{s_j}\xspace}
\newcommand{\aclient}{\ensuremath{c}\xspace}
\newcommand{\x}{\ensuremath{x}\xspace}
\newcommand{\matA}{\ensuremath{\mathbf{A}}\xspace}
\newcommand{\matB}{\ensuremath{\mathbf{B}}\xspace}
\newcommand{\matD}{\ensuremath{\mathbf{D}}\xspace}
\newcommand{\vecx}{\ensuremath{\mathbf{x}}\xspace}
\newcommand{\vecones}{\ensuremath{\mathbf{1}}\xspace}
\newcommand{\servedset}[1]{\ensuremath{N_{#1}}\xspace}
\newcommand{\kmedian}{\ensuremath{k}-median\xspace}
\newcommand{\newkmedian}{{\sc recon}-\ensuremath{k}-{\sc median}\xspace}
\newcommand{\dks}{{\sc d$k$s}\xspace}
\newcommand{\mpd}{{\sc mpd}\xspace}
\newcommand{\lsalgo}{{\tt LocalSearch}\xspace}
\newcommand{\spectralcomps}{{\ensuremath\gamma}\xspace}
\newcommand{\twitterdata}{{\tt Twitter}\xspace}
\newcommand{\congressdata}{{\tt Congress}\xspace}
\newcommand{\domainsdata}{{\tt Domains}\xspace}
\begin{document}

\title[Reconciliation $k$-median]{Reconciliation $k$-median: Clustering with Non-polarized Representatives}

\author{Bruno Ordozgoiti}
\authornote{The work described in this paper was done while this author was a visitor at Aalto University.}
\affiliation{%
  \institution{Department of Computer Systems \\
    Universidad Polit\'ecnica de Madrid}
}
\email{bruno.ordozgoiti@upm.es}

\author{Aristides Gionis}
\affiliation{%
  \institution{Department of Computer Science \\
   Aalto University}
}
\email{aristides.gionis@aalto.fi}

\renewcommand{\shortauthors}{B.\ Ordozgoiti \& A.\ Gionis}

\begin{abstract}
  
We propose a new variant of the \kmedian problem, 
where the objective function models not only the cost of assigning 
data points to cluster representatives, 
but also a penalty term for disagreement among the representatives. 
We motivate this novel problem by applications where we are interested in clustering 
data while avoiding selecting representatives that are too far from each other. 
For example, we may want to summarize a set of news sources, 
but avoid selecting ideologically-extreme articles in order to reduce polarization.


To solve the proposed \kmedian formulation we adopt the local-search algorithm of Arya et al.~\cite{arya2004local}, 
We show that the algorithm provides a provable approximation guarantee, 
which becomes constant under an assumption on the number of points for each cluster.
We experimentally evaluate our problem formulation and proposed algorithm
on datasets inspired by the motivating applications. 
In particular, we experiment with data extracted from Twitter, the US Congress voting records, and popular news sources. 
The results show that our objective can lead to choosing less polarized groups
of representatives 
without significant loss in representation~fidelity.

\end{abstract}

%
%
\begin{CCSXML}
<ccs2012>
<concept>
<concept_id>10003752.10003809.10003636.10003812</concept_id>
<concept_desc>Theory of computation~Facility location and clustering</concept_desc>
<concept_significance>500</concept_significance>
</concept>
<concept>
<concept_id>10003752.10003809.10003716.10011136</concept_id>
<concept_desc>Theory of computation~Discrete optimization</concept_desc>
<concept_significance>500</concept_significance>
</concept>
<concept>
<concept_id>10010405.10010455.10010461</concept_id>
<concept_desc>Applied computing~Sociology</concept_desc>
<concept_significance>300</concept_significance>
</concept>
</ccs2012>
\end{CCSXML}

\ccsdesc[500]{Theory of computation~Facility location and clustering}
\ccsdesc[500]{Theory of computation~Discrete optimization}
\ccsdesc[300]{Applied computing~Sociology}

\keywords{Clustering; $k$-median; polarization; committee selection; data mining; approximation algorithms.}

\maketitle

\section{Introduction}
\label{sec:intro}

Consider the problem of summarizing a set of news articles on a given topic. 
A standard approach to this problem is clustering: 
design a distance function that captures similarity between the news articles
and apply a clustering algorithm on the resulting metric space.
Common clustering formulations, such as \kmedian or $k$-means, 
can be used~\cite{jain1988algorithms}.
The original set of input news articles can then be summarized
by the (small) set of cluster representatives. 
In some cases, however, we may be interested
in selecting cluster representatives that are not too far from each other.
For example, we may want to find 
a set of representative news articles
that are not too extreme
so that they can provide a basis for constructive deliberation.
This motivation is similar to recent proposals in the literature
that aim to reduce polarization in social media~\cite{garimella2017reducing}
and balance the users' content consumption~\cite{garimella2017balancing}.
In this work we are interested in developing computational methods for 
clustering data
in a way that the disagreement of the cluster representatives is minimized.

Another motivating example appears in the context of electing a $k$-\emph{member committee} 
to represent a set of individuals, 
such as the employees of an organization or the members of a political party. 
Assuming that all individuals have public opinions on a set of issues, 
clustering these individuals on an opinion-representation space
will give a committee that faithfully represents 
the set of individuals with respect to the issues under consideration. 
Despite providing a good representation, however, 
a committee elected with a standard clustering approach 
may fail to reach consensus due to potential heterogeneity within the committee members.
Heterogeneity within elected members of an assembly is a widely acknowledged problem in politics ---
for instance, division of representatives often results in paralysis in various left-wing political formations.\footnote{https://www.theguardian.com/commentisfree/2019/feb/19/podemos-spanish-politics}
As in the previous example, 
we are interested in electing a committee
in a way that the disagreement of the elected members is minimized
while ensuring a faithful representation of the constituents.

Motivated by the previous examples we introduce a new
formulation of the \kmedian problem, 
where in addition to the \kmedian objective we also 
seek to minimize disagreement between the cluster representatives. 
As it is customary, we consider a metric space $(\X,\distance)$, 
where $\distance$ is a distance function for objects in $\X$.
We distinguish two subsets of $\X$, 
the set of facilities $\facilities$ and the set of clients $\clients$.
The goal is to select a set of $k$ facilities $\sol \subseteq \facilities$ 
--- the cluster representatives ---
so as to minimize the overall cost
\begin{equation}
\cost(\sol) =  
\sum_{\aclient\in\clients}\min_{\afacility\in\sol} \{\distance(\aclient,\afacility)\} + 
\frac{\lambda}{2} 
\sum_{\facilityi\in \sol}\sum_{\facilityj\in \sol} \distance(\facilityi,\facilityj).
\label{equation:objective}
\end{equation}
The first term is the same as in the standard \kmedian, 
and expresses the cost of serving each client by its closest selected facility. 
The second term is the one introduced in this paper
and expresses disagreement between cluster representatives. 
The parameter $\lambda$ determines the relative importance of the two terms.
Despite clustering being one of the most well-studied problems in statistical data analysis, 
and the numerous formulations and problem variants that have been studied in the literature, 
to our knowledge, the problem defined above has not been considered before.

As expected, the problem defined by optimizing Equation~(\ref{equation:objective}) is \np-hard;
in fact optimizing each of the two terms separately is an \np-hard problem.
%
Given the hardness of the problem, it is compelling to consider algorithms with provable 
approximation guarantees. 
For the \kmedian algorithm several approximation algorithms exist~\cite{charikar1999constant,li2016approximating}.
A local-search algorithm, which is simple to implement and scalable to large datasets, 
has been proposed by Arya et al.~\cite{arya2004local}.
The algorithm starts with an arbitrary solution and considers a swap of $p$ selected facilities
with $p$ non-selected facilities; the swap is materialized if the cost improves, 
and the process continues until the cost cannot be improved. 
Arya et al.\ show that this 
algorithm achieves an approximation guarantee equal to $3+2/p$, 
and the running time is $\bigO(n^p)$.
In particular, for $p=1$, the algorithm gives an approximation guarantee equal to $5$, 
while the running time is linear.

In this paper we show how to adapt the local-search algorithm of Arya et al.~\cite{arya2004local}
for the problem we consider and obtain an approximation guarantee $\bigO(k)$ 
in the case $F=C$.
The proposed algorithm considers 1-facility swaps, i.e., $p=1$.
Furthermore, when the clusters of the obtained solution have equal size of
$\Omega(\lambda k)$, the approximation factor becomes $11$, i.e., a constant. 
We complete the analysis of the proposed problem by deriving bounds on
the objective function.

Our contributions in this paper are summarized as follows.

\begin{itemize}
\item
We introduce the {\em reconciliation} $k$-{\em median} problem, 
a novel clustering problem formulation
where we aim to optimize the data representation cost
plus a term for agreement between the cluster representatives. 
\item 
We adapt the local-search algorithm of Arya et al.~\cite{arya2004local}
and obtain provable approximation guarantees for the proposed clustering problem.
\item 
We run experiments on datasets extracted from the Twitter social
network, US Congress voting records, and popular news sources. The results show that the proposed
objective can lead to the choice of less polarized groups of
representatives, as measured by a well-known method for ideology
estimation~\cite{barbera2014birds} and an objective estimate of the political leaning of news sources.
\end{itemize}

\spara{Erratum}: A previous version of this paper incorrectly omitted some of the necessary assumptions. In particular, the statement of Theorem \ref{theorem_k} overlooked the requirement that the client and the facility sets be the same. On the other hand, an argument in the proof of Theorem \ref{the:approx} required the cardinality of the clusters to be similar (and them being the same is sufficient). These details have been corrected in the present version. The proof of Theorem \ref{theorem_k}, which was omitted in the conference publication due to space constraints, is included here as well.

The rest of the paper is structured as follows.
In Section~\ref{section:related} we present a brief overview to 
the literature that is most related to our work.
In Section~\ref{section:problem} we formally define the 
{reconciliation} $k$-{median} problem. 
In Section~\ref{sec:algorithms} we present 
the local-search algorithm and state its approximability properties.
In Section~\ref{sec:experiments} we present our experimental evaluation, 
while Section~\ref{section:conclusions} is a short conclusion.
To improve readability, the hardness proof of the reconciliation \kmedian problem
and the proof of the approximation guarantee of the local-search algorithm 
are presented in the Appendix. 

\section{Related work}
\label{section:related}

Data clustering is one of the most well-studied problems in data analysis,
with applications in a wide range of areas~\cite{jain1988algorithms}.
Among the numerous formulations that have been proposed, 
in this paper we focus on the \kmedian problem setting, 
which has been studied extensively in the theoretical computer-science literature. 
Charikar et al.~\cite{charikar1999constant} gave the first constant-factor
approximation algorithm for the \kmedian problem, 
followed by improvements that relied on both
combinatorial and LP-based algorithms~\cite{charikar1999improved,jain2002new,jain2001approximation}. 
In this paper we build upon the local-search algorithm of Arya et al.~\cite{arya2004local}.
This is a simple-to-implement and scalable algorithm
that had been offering the best performance guarantee for over a decade.
The current best approximation guarantee is $2.67+\epsilon$, 
provided by the algorithm of Byrka et al.~\cite{byrka2014improved},  
which optimizes a part of the algorithm of Li and Svensson~\cite{li2016approximating}. 
However, the algorithm is not practical.

Variants of the $k$-median problem have also been considered, 
including the Euclidean $k$-median~\cite{ahmadian2017better},
capacitated $k$-median~\cite{byrka2014bi}, 
ordered $k$-median~\cite{byrka2018constant-factor}, 
and more.
To our knowledge, however, this is the first work to study the problem of \kmedian clustering
with a penalty on the disagreement of the cluster representatives.
Instead, researchers have studied the problem of 
selecting $k$ points to \emph{maximize} the sum of pairwise distances, 
i.e., the \emph{dispersion} of the selected point set. 
Several constant-factor approximation algorithms have been proposed
for the maximum-dispersion problem~\cite{fekete2004maximum,hassin1997approximation}. 
However, the maximization makes the problem different 
and it is not clear how to adapt those algorithms in our setting.

One of our motivating applications 
is summarization of social-media content with the aim of reducing polarization 
and balancing the information content delivered to users. 
This is a relatively new research area that is receiving increasing interest~\cite{barbera2014birds,garimella2017reducing,garimella2017balancing,munson2013encouraging,musco2017minimizing}.
However, to the best of our knowledge, none of the proposed approaches uses a clustering formulation.

The second motivating application is election of committees and representatives. 
In some cases, election questions can also be formulated as voting problems.
Voting in general has been studied extensively in social sciences and economics literature. 
From the algorithmic perspective, researchers have studied questions about 
voting in social networks and concepts such as liquid and viscous democracy~\cite{boldi2009voting,yamakawa2007toward}.
In addition to being not directly related to our paper, this line of work 
does not directly model agreement between elected representatives.

\section{Problem formulation}
\label{section:problem}

\subsection{Preliminaries} 

We formulate our problem in the general setting of \textit{metric facility location}~\cite{jain2001approximation}. 
We consider a metric space $(\X,\distance)$ and two subsets $\facilities,\clients \subseteq \X$, 
not necessarily disjoint. 
The set \facilities represents \emph{facilities}, 
and the set \clients represents \emph{clients}. 
A special case of interest is when clients and facilities
are defined over the same set, i.e., $\facilities=\clients$.
In our discussion we consider the more general case that the sets of clients and facilities are disjoint.
The function $\distance : \X\times\X\rightarrow\reals$ is a distance measure between pairs of points in \X.
When $\aclient\in\clients$ and $\afacility\in\facilities$, 
the distance $\distance(\aclient,\afacility)$ represents the cost
of serving client~\aclient with facility \afacility. 
The number of facilities is denoted by $\nofacilities=|\facilities|$ and 
the number of clients by $\noclients=|\clients|$.

The goal is to open $k$ facilities --- i.e., choose $k$ points in $\facilities$ --- 
such that the cost of serving each client in $\clients$ with the nearest \emph{selected} facility is minimized. 
Given a set of facilities $\sol\subseteq \facilities$, with $|\sol|=k$, 
and $\afacility\in \sol$, we use $\servedset{\sol}(\afacility)$ 
to denote the set of clients served by facility $\afacility$ in the solution \sol, that is, 
$\servedset{\sol}(\afacility)=\{\aclient\in \clients \mid \afacility=\argmin_{\x\in\sol}\distance(\aclient,\x)\}$. 
In the \emph{facility location} formulation 
each facility has an associated cost, which is incurred if the facility is opened (selected).
The objective is to minimize the total cost of serving all clients plus the cost of 
opened facilities, while there is no restriction on the number of opened facilities.
When the cost of opening each facility is zero and it is required to open at most $k$ facilities, 
the problem is known as \emph{\kmedian}.

\subsection{Reconciling cluster representatives} 

The problems described above, facility location and \kmedian, 
are commonly used to find cluster representatives without regard to the relative position
of the representatives themselves. 
As discussed in the introduction, our goal is to modify the problem definition 
so as to find solutions in which the cluster representatives are close to each other. 
To achieve our goal we propose the following clustering variant, 
which we name reconciliation \kmedian.

\begin{problem} [\newkmedian]
\label{def:prob_clust}
Given a metric space $(\X,\distance)$, two sets $\facilities,\clients\subseteq \X$, 
$k \in \integers$, and a real number $\lambda>0$, find 
a set $\sol\subseteq\facilities$ with $|\sol|=k$, so as to 
\emph{minimize} the cost function
\begin{equation}
\label{equation:objective-2}
\cost(\sol) =  
\sum_{\afacility\in\sol}\sum_{\aclient\in\servedset{\sol}(\afacility)} \distance(\aclient,\afacility) + 
\frac{\lambda}{2}
\sum_{\facilityi\in \sol}\sum_{\facilityj\in \sol} \distance(\facilityi,\facilityj).
\end{equation}
\end{problem}


\smallskip
In order to characterize the hardness of this problem, 
we analyze the two terms of the objective in isolation. 
The first term, which results from setting $\lambda=0$, 
is equivalent to the classical metric \kmedian problem, 
shown to be \np-hard by Papadimitriou~\cite{papadimitriou1981worst}. 
To analyze the second term, we define the following equivalent problem, 
which asks to find a subset of $k$ points that minimize the sum of pairwise distances
in a metric space, i.e., minimum pairwise distances (\mpd).
\begin{problem}[\mpd]
\label{def:prob_mpd}
Given a metric space $(\X,\distance)$, a subset of objects in the metric space 
$\facilities=\{\x_1, \dots, \x_n\} \subseteq \X$,
and a number $k\in \integers$, with $k < n$, 
define a matrix $\matA \in \reals^{n\times n}$ as 
$\matA_{ij}=\distance(\x_i,\x_j)$ for all $\x_i,\x_j \in \facilities$. 
The goal is to find a binary vector \vecx of dimension $n$ that has exactly $k$ coordinates equal to 1
and minimizes the form $\vecx^T\matA\,\vecx$. 
In other words, we want to find
\begin{align*}
\min             \quad & \vecx^T\matA\,\vecx, \\
\mbox{subject to}\quad & \vecx\in \{0,1\}^n \mbox{ and }\, \vecx^T\vecones=k.
\end{align*}
\end{problem}
The following lemma establishes that 
there exists no polynomial-time algorithm to find the exact solution
to the \mpd problem, unless $\poly=\np$.
The proof is given in the Appendix.
\begin{lemma}
\label{lemma:mdp-np-hardness}
Problem \mpd is \np-hard.
\end{lemma}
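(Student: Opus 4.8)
The plan is to reduce from \clique, which is one of the canonical NP-complete problems. Given an instance $(\graph,k)$ of \clique, where $\graph=(\vertices,\edges)$ with $|\vertices|=n$ and we ask whether $\graph$ contains a clique of size $k$, I would build a metric space on the vertex set $\vertices$ as follows: set $\distance(\x_i,\x_j)=1$ if $\{\x_i,\x_j\}\in\edges$, and $\distance(\x_i,\x_j)=2$ if $\{\x_i,\x_j\}\notin\edges$ (and of course $\distance(\x_i,\x_i)=0$). This is a classic construction: any assignment of distances in $\{1,2\}$ to the pairs of a finite set automatically satisfies the triangle inequality, since for any three points the largest of the three pairwise distances is at most $2$ and the sum of the other two is at least $1+1=2$. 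So $(\vertices,\distance)$ is a valid metric space and the matrix $\matA$ with $\matA_{ij}=\distance(\x_i,\x_j)$ is a legitimate \mpd input with $\facilities=\vertices$.

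The key observation is then a counting identity. For a binary vector $\vecx$ with exactly $k$ ones, selecting a subset $\sol\subseteq\vertices$ of size $k$, we have $\vecx^T\matA\,\vecx=\sum_{\x_i,\x_j\in\sol}\distance(\x_i,\x_j)$, where the sum is over ordered pairs including $i=j$ (which contribute $0$). Splitting the off-diagonal terms into edge pairs and non-edge pairs, if $\sol$ spans exactly $e(\sol)$ edges of $\graph$, then
\begin{equation*}
\vecx^T\matA\,\vecx = 2\,e(\sol)\cdot 1 + 2\,\Bigl(\binom{k}{2}-e(\sol)\Bigr)\cdot 2 = 4\binom{k}{2} - 2\,e(\sol).
\end{equation*}
Hence minimizing $\vecx^T\matA\,\vecx$ over all $k$-subsets is exactly equivalent to \emph{maximizing} $e(\sol)$, the number of edges induced by $\sol$. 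In particular, the minimum value of the quadratic form equals $4\binom{k}{2}-2\binom{k}{2}=2\binom{k}{2}=k(k-1)$ if and only if there is a $k$-subset inducing $\binom{k}{2}$ edges, i.e.\ a $k$-clique in $\graph$. Thus an exact polynomial-time algorithm for \mpd would decide \clique in polynomial time, so \mpd is \np-hard; equivalently this shows the (equivalent) decision version is NP-hard, and therefore there is no polynomial-time exact algorithm unless $\poly=\np$.

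The reduction is clearly polynomial (the matrix $\matA$ has $n^2$ entries, each computed in constant time from adjacency), so the only real content is verifying the two points above: that the $\{1,2\}$-valued distance function is genuinely a metric, and that the quadratic form linearly tracks the induced edge count. Neither is difficult; the main thing to be careful about is bookkeeping the factor of $2$ from ordered versus unordered pairs and the diagonal zeros, so that the algebraic identity relating $\vecx^T\matA\,\vecx$ to $e(\sol)$ comes out exactly right. I do not anticipate a genuine obstacle here — this is a textbook-style gadget reduction, and the triangle-inequality check for $\{1,2\}$ distances is the only place where one might otherwise be tempted to skip a line that actually needs stating.
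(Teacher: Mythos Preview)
Your proof is correct and uses essentially the same construction as the paper: the $\{1,2\}$-valued metric on vertices (distance $1$ for edges, $2$ for non-edges), together with the observation that the quadratic form equals a constant minus twice the induced edge count. The only cosmetic difference is that the paper phrases the reduction as coming from \textsc{Densest $k$-Subgraph} (showing \mpd minimization is equivalent to edge-count maximization), whereas you go one step further and tie the threshold $k(k-1)$ directly to \clique; since \dks hardness is itself typically derived from \clique, your version is if anything slightly more self-contained, but the argument is the same.
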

%
Lemma \ref{lemma:mdp-np-hardness} establishes that optimizing separately the second term of 
the objective function~(\ref{equation:objective-2}) is an \np-hard problem.
Note, however, that the hardness of the two terms in Problem \ref{def:prob_clust} 
does not immediately imply the hardness of the overall problem. 
Consider, for instance, an objective of the form $\min_\vecx \left\{ f(\vecx) + (c-f(\vecx)) \right\}$. 
Even though optimizing $f$ can be an arbitrary \np-hard problem, the overall problem has a constant value,
and thus, it is trivial to optimize --- there is nothing to be done. 
We now show that Problem~\ref{def:prob_clust} is indeed \np-hard.
\begin{lemma}
\label{lemma:newkmedian-np-hardness}
Problem \newkmedian is \np-hard.
\end{lemma}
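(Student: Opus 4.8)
The plan is to reduce from the \mpd problem, whose \np-hardness has just been established in Lemma~\ref{lemma:mdp-np-hardness}. The obstacle the authors themselves flag --- that hardness of the two terms of~(\ref{equation:objective-2}) in isolation need not imply hardness of their sum --- is circumvented by engineering the reduction so that the \emph{first} (client-service) term of the objective is forced to be a constant, independent of the chosen solution $\sol$. What then remains to be optimized is exactly the pairwise-distance term, which is precisely an \mpd instance.

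Concretely, given an \mpd instance --- a metric $(\X,\distance)$, a set $\facilities=\{\x_1,\dots,\x_n\}$, and an integer $k<n$ --- I would output the \newkmedian instance with the same metric, the same $\facilities$, the same $k$, any fixed $\lambda>0$ (say $\lambda=1$), and no clients, $\clients=\emptyset$. Then $\servedset{\sol}(\afacility)=\emptyset$ for every $\afacility\in\sol$, the first sum of~(\ref{equation:objective-2}) vanishes, and $\cost(\sol)=\tfrac12\sum_{\facilityi\in\sol}\sum_{\facilityj\in\sol}\distance(\facilityi,\facilityj)=\tfrac12\,\vecx^T\matA\vecx$, where $\vecx\in\{0,1\}^n$ is the indicator of $\sol$. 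Hence $\sol$ minimizes $\cost$ over size-$k$ subsets if and only if $\vecx$ minimizes $\vecx^T\matA\vecx$ subject to $\vecx^T\vecones=k$, so an exact polynomial-time algorithm for \newkmedian would solve \mpd in polynomial time, contradicting Lemma~\ref{lemma:mdp-np-hardness} unless $\poly=\np$.

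If one prefers to stay in the regime $\clients\neq\emptyset$ (or even $\clients=\facilities$), I would instead attach a single dummy client $\aclient^{*}$ with $\distance(\aclient^{*},\x_i)=D$ for all $i$ and $D\ge\mathrm{diam}(\facilities)$; the augmented $\distance$ remains a metric, the client $\aclient^{*}$ contributes exactly $D$ to $\cost(\sol)$ for every feasible $\sol$, and the argument is unchanged. An equally valid alternative would reduce from metric \kmedian (\np-hard by Papadimitriou~\cite{papadimitriou1981worst}): leave the instance untouched but set $\lambda$ small enough --- e.g.\ $\lambda<2/(k^2\,\mathrm{diam}(\facilities))$, which has polynomial bit-length for integer distances --- that the pairwise penalty is strictly smaller than the minimum gap between two distinct \kmedian costs, forcing the \newkmedian optimum to coincide with the \kmedian optimum.

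The step I expect to require the most care is exactly the point the authors raise: certifying that the surviving term has not been inadvertently trivialized and that the constructed instance is still legal (metric, with the required parameters). In the $\clients=\emptyset$ version this reduces to checking that Problem~\ref{def:prob_clust} permits an empty client set; in the dummy-client version, to verifying the triangle inequality for the extended distance; and in the small-$\lambda$ version, to bounding the bit-complexity of $\lambda$ and formalizing the integrality/tie-breaking argument. None of these is deep, so once the neutralizing gadget is fixed the reduction goes through immediately.
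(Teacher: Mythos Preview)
Your proposal is correct and follows essentially the same route as the paper: reduce from \mpd by forcing the client-service term of~(\ref{equation:objective-2}) to be a constant independent of $\sol$, so that what remains is exactly the \mpd objective. The paper implements this by taking an arbitrary nonempty client set $\clients$ disjoint from $\facilities$ and setting every client--facility distance to $\tfrac12\max_{\facilityi,\facilityj\in\facilities}\distance(\facilityi,\facilityj)$, which is precisely your dummy-client alternative with a specific choice of $D$; your $\clients=\emptyset$ variant is the same idea stripped down further.
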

\begin{proof}
Consider an instance of the \mpd problem, for a given set \facilities and a number $k$, 
and form an instance $(\facilities,\clients,k,\lambda)$ of the \newkmedian problem,  
where \clients is any arbitrary set of clients with $\facilities\cap\clients=\emptyset$.  
Set 
$\distance(\afacility,\aclient)=\frac{1}{2}\max_{\facilityi,\facilityj \in \facilities}\{\distance(\facilityi,\facilityj)\}$, 
for all $\aclient\in \clients$ and $\afacility \in \facilities$. 
Note that the distance function \distance is still a valid metric.
We have that $\sum_{\aclient\in \clients} \min_{\afacility\in\sol} \distance(\aclient,\afacility)$ is constant
for any potential solution set $\sol\subseteq\facilities$,
which implies that optimizing \newkmedian is 
equivalent to optimizing \mpd.
Thus, from Lemma \ref{lemma:mdp-np-hardness} we obtain that \newkmedian is \np-hard.
\end{proof}

\subsection{Bounds on the objective} 

To complete the analysis of the \newkmedian problem we offer bounds on the objective,
which can be used to evaluate the quality of the solution obtained by any algorithm for the problem
at a given instance.
For this to be useful, the bounds need to be non-trivial and as close as possible to the optimal solution. 
We will first show how to obtain a lower bound on the second term of the objective, 
that is, $\sum_{\facilityi,\facilityj\in \sol} \distance(\facilityi,\facilityj)$. 
We first introduce the following definition. 
\begin{definition} 
Given two sequences of real numbers $\lambda_1 \geq \lambda_2 \geq
\dots \geq \lambda_n$ and $\mu_1 \geq \mu_2 \geq \dots \geq \mu_m$,
with $m<n$, we say that the second sequence interlaces the first if  
\begin{equation*}
\lambda_i \geq \mu_i \geq \lambda_{n-m+i} ~~~ \mbox{for $i=1,\dots,m$}
\end{equation*}
\end{definition}
We will employ the following result from Haemers~\cite{haemers1995interlacing}.
\begin{lemma}
\label{the:interlacing}
Let \matA be a symmetric $n\times n$ matrix, and let \matB be a principal submatrix of \matA.
Then the eigenvalues of matrix \matB interlace those of matrix \matA.
\end{lemma}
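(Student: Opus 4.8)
The plan is to derive interlacing from the Courant–Fischer min–max characterization of the eigenvalues of a symmetric matrix. Write the eigenvalues of $\matA$ as $\lambda_1 \ge \dots \ge \lambda_n$ and those of $\matB$ as $\mu_1 \ge \dots \ge \mu_m$. Since $\matB$ is a principal submatrix of $\matA$, there is an index set $I \subseteq \{1,\dots,n\}$ with $|I|=m$ such that $\matB$ is obtained by keeping exactly the rows and columns of $\matA$ indexed by $I$. Let $W = \mathrm{span}\{\mathbf{e}_i : i \in I\} \subseteq \reals^n$; this is an $m$-dimensional subspace, and the key elementary observation is that for every $\vecx \in W$, identifying $\vecx$ with its vector of coordinates on $I$, we have $\vecx^T\matA\,\vecx = \vecx^T\matB\,\vecx$ while $\vecx^T\vecx$ is unchanged. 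Thus the Rayleigh quotient of $\matB$ on $\reals^m$ is exactly the restriction to $W$ of the Rayleigh quotient of $\matA$.

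First I would prove the upper half, $\mu_i \le \lambda_i$ for $i=1,\dots,m$. By Courant–Fischer, $\lambda_i = \max_{\dim S = i}\ \min_{0 \ne \vecx \in S}\ \vecx^T\matA\,\vecx / \vecx^T\vecx$, the maximum taken over all $i$-dimensional subspaces $S$ of $\reals^n$; applying the same characterization to $\matB$ and using the observation above, $\mu_i = \max_{S \subseteq W,\ \dim S = i}\ \min_{0 \ne \vecx \in S}\ \vecx^T\matA\,\vecx / \vecx^T\vecx$, where now $S$ ranges only over $i$-dimensional subspaces of $W$. Maximizing over a sub-collection of subspaces can only decrease the value, hence $\mu_i \le \lambda_i$.

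Next I would obtain the lower half, $\mu_i \ge \lambda_{n-m+i}$, by applying the upper half to the pair $(-\matA, -\matB)$: the eigenvalues of $-\matA$ in decreasing order are $-\lambda_n \ge \dots \ge -\lambda_1$, those of $-\matB$ are $-\mu_m \ge \dots \ge -\mu_1$, and $-\matB$ is still a principal submatrix of $-\matA$. The upper-half inequality for this pair says that the $j$-th largest eigenvalue of $-\matB$ is at most the $j$-th largest of $-\matA$, i.e., $-\mu_{m-j+1} \le -\lambda_{n-j+1}$ for $j=1,\dots,m$; rearranging and substituting $i = m-j+1$ gives $\mu_i \ge \lambda_{n-m+i}$ for $i=1,\dots,m$. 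Combining the two halves yields $\lambda_i \ge \mu_i \ge \lambda_{n-m+i}$ for all $i$, which is precisely the statement that the spectrum of $\matB$ interlaces that of $\matA$.

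The only genuine content is the Rayleigh-quotient restriction identity; everything else is min–max plus the sign flip. The step most prone to slips is the index bookkeeping in the last paragraph — keeping straight which eigenvalue is the ``$j$-th largest'' after negating the two spectra — so I would double-check that relabeling against a small example. An alternative route is to first treat the case $m = n-1$ (deletion of a single index), where the two inequalities read $\lambda_i \ge \mu_i \ge \lambda_{i+1}$, and then iterate $n-m$ times; this avoids handling general $m$ at once but requires composing the one-step bounds, so I would prefer the direct argument above.
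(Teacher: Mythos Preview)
Your proof is correct: the Courant--Fischer min--max argument, together with the sign-flip trick to get the lower inequalities, is the standard route to Cauchy interlacing, and your index bookkeeping checks out. Note, however, that the paper does not actually prove this lemma --- it is stated as a known result and attributed to Haemers~\cite{haemers1995interlacing} (who in fact proves a more general version for matrices of the form $\matB = \mathbf{S}^T\matA\,\mathbf{S}$ with $\mathbf{S}^T\mathbf{S} = I$, of which the principal-submatrix case is the special instance $\mathbf{S}$ equal to a subset of the standard basis vectors). So there is no in-paper proof to compare against; your argument simply supplies what the paper outsources to a citation.
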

We now state the following result.
\begin{theorem}
Let \matD be the pairwise distance matrix of facilities of an instance of \newkmedian problem.
Define matrix $\widetilde\matD$ so that $\widetilde\matD_{ij}=\sqrt{\matD_{ij}}$, that is, 
a matrix whose entries are the square roots of the entries of \matD. Let $\lambda_i(\widetilde\matD)$ denote the $i$-th absolutely largest
eigenvalue of $\widetilde\matD$.
Then
\begin{equation*}
k\lambda_1(\matD) \geq \sum_{\facilityi\in \sol}\sum_{\facilityj\in \sol} \distance(\facilityi,\facilityj) 
\geq \sum_{i=n-k+1}^n \lambda_i^2(\widetilde\matD).
\end{equation*}
\end{theorem}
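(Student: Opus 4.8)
Let $\vecx\in\{0,1\}^n$ be the indicator vector of the selected set $\sol$, so that $\vecx^T\vecones=k$ and hence $\|\vecx\|_2^2=k$. Since $\matD$ has zero diagonal and records all pairwise facility distances, the quantity to be bounded is exactly the quadratic form $\sum_{\facilityi\in\sol}\sum_{\facilityj\in\sol}\distance(\facilityi,\facilityj)=\vecx^T\matD\,\vecx$. The plan is to squeeze this form from above using the spectrum of $\matD$ directly, and from below by rewriting it as the Frobenius norm of a principal submatrix of $\widetilde\matD$ and invoking the interlacing lemma.

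For the upper bound I would diagonalize the symmetric matrix $\matD=\sum_i\lambda_i(\matD)\,u_iu_i^T$ in an orthonormal eigenbasis $\{u_i\}$ and expand $\vecx=\sum_i(u_i^T\vecx)\,u_i$; then $\vecx^T\matD\,\vecx=\sum_i\lambda_i(\matD)\,(u_i^T\vecx)^2\le\lambda_1(\matD)\sum_i(u_i^T\vecx)^2=\lambda_1(\matD)\,\|\vecx\|_2^2=k\,\lambda_1(\matD)$. The bound is non-trivial because $\operatorname{tr}\matD=0$ with $\matD\neq\mathbf 0$, so the top eigenvalue $\lambda_1(\matD)$ is strictly positive.

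For the lower bound the key observation is that $\widetilde\matD_{ij}^2=\matD_{ij}$, whence $\vecx^T\matD\,\vecx=\sum_{i,j\in\sol}\widetilde\matD_{ij}^2=\|\widetilde\matD_\sol\|_F^2$, where $\widetilde\matD_\sol$ is the $k\times k$ principal submatrix of $\widetilde\matD$ on the rows and columns corresponding to the facilities of $\sol$. Since $\widetilde\matD_\sol$ is symmetric, $\|\widetilde\matD_\sol\|_F^2=\operatorname{tr}(\widetilde\matD_\sol^2)=\sum_{\ell=1}^k\mu_\ell^2$, where $\mu_1\ge\dots\ge\mu_k$ are its eigenvalues. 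I would then apply Lemma~\ref{the:interlacing}: the eigenvalues of the principal submatrix $\widetilde\matD_\sol$ interlace those of $\widetilde\matD$, giving $\lambda_\ell(\widetilde\matD)\ge\mu_\ell\ge\lambda_{n-k+\ell}(\widetilde\matD)$ for $\ell=1,\dots,k$ (eigenvalues of $\widetilde\matD$ listed in decreasing order). The final step is to pair each $\mu_\ell$ with an extreme eigenvalue of $\widetilde\matD$ so that $\mu_\ell^2\ge\lambda_{n-k+\ell}^2(\widetilde\matD)$ and conclude $\sum_{\ell=1}^k\mu_\ell^2\ge\sum_{i=n-k+1}^n\lambda_i^2(\widetilde\matD)$.

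I expect this last step to be the main obstacle. Interlacing pins $\mu_\ell$ between two eigenvalues of $\widetilde\matD$ but does not by itself control $\mu_\ell^2$: since $\operatorname{tr}\widetilde\matD=0$, $\widetilde\matD$ has eigenvalues of both signs, and to get $\mu_\ell^2\ge\lambda_{n-k+\ell}^2(\widetilde\matD)$ one needs the interlacing interval for $\mu_\ell$ to lie on one side of $0$. Making this pairing rigorous — or repairing it using that $\widetilde\matD$, being the entrywise square root of a metric, is itself the distance matrix of a metric — is the delicate part of the argument; the upper bound, by contrast, is the one-line Rayleigh-quotient estimate above.
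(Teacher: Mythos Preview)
Your approach is exactly the paper's: the upper bound via the Rayleigh quotient (the paper phrases it as ``the variational characterization of the eigenvalues''), and the lower bound by rewriting $\vecx^T\matD\,\vecx$ as $\|\widetilde\matD_\sol\|_F^2=\sum_\ell\mu_\ell^2$ and then appealing to Cauchy interlacing. The paper's argument for the lower bound is in fact terser than yours --- it simply asserts that the Frobenius-norm identity ``combined with [the interlacing lemma] proves the lower bound'' and says nothing further.

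The obstacle you single out is real, and the paper does not resolve it either. Cauchy interlacing gives $\lambda_\ell(\widetilde\matD)\ge\mu_\ell\ge\lambda_{n-k+\ell}(\widetilde\matD)$ for the eigenvalues in the usual (signed) order, but this does not in general bound $\mu_\ell^2$ from below by the $k$ smallest $\lambda_i(\widetilde\matD)^2$: when the interlacing interval straddles $0$, $\mu_\ell$ may be arbitrarily small in absolute value. Concretely, take $n=2$, $k=1$, and $\widetilde\matD=\bigl(\begin{smallmatrix}0&1\\1&0\end{smallmatrix}\bigr)$; the eigenvalues of $\widetilde\matD$ are $\pm1$, so the sum of the $k=1$ smallest squares is $1$, yet every $1\times1$ principal submatrix has eigenvalue $0$ and $\vecx^T\matD\,\vecx=0$. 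Thus the pairing $\mu_\ell^2\ge\lambda_{n-k+\ell}^2(\widetilde\matD)$ you were hoping for is not a consequence of eigenvalue interlacing alone, and no additional structure of square-root distance matrices is invoked in the paper to rescue it. In short: your write-up matches the paper's proof line for line and, in flagging this step, you have located a gap that the paper glosses over rather than a difficulty peculiar to your own argument.
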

\begin{proof}
Recall that for any real symmetric matrix \matD, 
$\|\matD\|_F^2=\sum_{i}\sigma_i^2(\matD)=\sum_{i}\lambda_i^2(\matD)$. 
It is easily seen that if $\vecx$ is a binary vector with exactly $k$ entries equal to 1, 
$\vecx^T\matD\,\vecx$ is equal to the sum of the entries of a $k\times k$ principal submatrix of \matD, 
which is in turn equal to the squared Frobenius norm of the corresponding submatrix of~$\widetilde\matD$. 
Combined with Lemma~\ref{the:interlacing}, this proves the lower bound.

The upper bound follows immediately from the variational characterization of the eigenvalues 
\citep[chap. 9.2]{kreyszig1978introductory}.
\end{proof}

A lower and upper bound on the first term of the objective can simply be given by 
$\sum_{\aclient\in\clients}\min_{\afacility\in\facilities}\distance(\aclient,\afacility)$
and
$\sum_{\aclient\in\clients}\max_{\afacility\in\facilities}\distance(\aclient,\afacility)$,
respectively.
This is useful only for problem instances where the number of facilities is relatively small
compared to the number of clients.

\section{The local-search algorithm}
\label{sec:algorithms}

In this section we present the proposed algorithm for the \newkmedian problem
and state its properties. 
The algorithm uses the local-search strategy,  
proposed by Arya et al.~\cite{arya2004local}, 
but adapted for the objective function of \newkmedian. 
The algorithm starts with an arbitrary solution consisting of $k$ selected facilities.
It then proceeds in iterations. 
In each iteration it considers whether it is possible to swap a selected facility
with a non-selected facility and obtain an improvement in the objective score. 
If such an improvement is possible, the corresponding swap is performed. 
The algorithm terminates when no such swap is possible.
At each point during its execution, 
the algorithm maintains a set of $k$ clusters over the set of clients, and a selected facility for each cluster, 
defined by assigning each client to its closest selected facility. 
Pseudocode of this local-search procedure is given in Algorithm~\ref{algorithm:local-search}.

\begin{algorithm}[t]
\caption{Local search\label{algorithm:local-search}}
\begin{algorithmic}[1]
\Procedure{\lsalgo}{\facilities, \clients, $k$, $\lambda$}
\label{alg:local_search}
\State $\sol \gets $ random subset of \facilities of cardinality $k$
\State converged $\gets$ {\tt false}
\While {not converged}
\If {there exist $\afacility\in\sol$ and $\bfacility\in\facilities\setminus\sol$ such that 
$\cost(\sol\setminus\{\afacility\}\cup \{\bfacility\})<\cost(\sol)$}
\State $\sol \gets \sol\setminus\{\afacility\}\cup \{\bfacility\}$
\Else 
\ 
converged $\gets$ {\tt true}
\EndIf
\EndWhile
\State return $\sol$
\EndProcedure
\end{algorithmic}
\end{algorithm}

For the analysis 
we denote by $\sol=\{\afacility_1,\ldots,\afacility_k\}\subseteq\facilities$ 
the solution returned by \lsalgo, 
and $\optimal=\{\opt_1,\ldots,\opt_k\}\subseteq\facilities$ 
an optimal solution. 
As mentioned before
we use the notation 
$\servedset{\sol}(\afacility)$ to denote the set of clients that are assigned to facility \afacility in the solution \sol, 
and 
$\servedset{\optimal}(\opt)$ to denote the set of clients that are assigned to facility \opt
in the optimal solution \optimal. 

To analyze the performance of \lsalgo we follow the ideas of Arya et al.~\cite{arya2004local}.
The proofs are included in the Appendix.

As a result, in the case $F=C$, the \lsalgo algorithm yields a $\bigO(\lambda k)$-factor
approximation guarantee on the quality of the solution achieved. 
%
%
\begin{theorem}
  \label{theorem_k}
Let $(\clients,\clients,k,\lambda)$ be an instance of the \newkmedian problem. 
Let \sol be a solution returned by the \lsalgo algorithm, 
and let \optimal be an optimal solution. Then
\begin{equation}
\cost(\sol) \leq 2(\lambda k+5) \cost(\optimal).
\end{equation}
\end{theorem}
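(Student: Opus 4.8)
The plan is to adapt the local-search analysis of Arya et al.~\cite{arya2004local} to the augmented objective. Write $\cost(\sol)=A(\sol)+B(\sol)$, where $A(\sol)=\sum_{\afacility\in\sol}\sum_{\aclient\in\servedset{\sol}(\afacility)}\distance(\aclient,\afacility)$ is the \kmedian term and $B(\sol)=\frac{\lambda}{2}\sum_{\facilityi\in\sol}\sum_{\facilityj\in\sol}\distance(\facilityi,\facilityj)$ is the disagreement term. Since \lsalgo returns a local optimum, $\cost(\sol\setminus\{\afacility\}\cup\{\bfacility\})\geq\cost(\sol)$ for every $\afacility\in\sol$ and every $\bfacility\in\facilities\setminus\sol$. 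I would apply this to the family of $k$ \emph{test swaps} of Arya et al., which pairs each optimal facility $\opt\in\optimal$ with a facility $s(\opt)\in\sol$ so that every $\opt$ is swapped in exactly once and every facility of $\sol$ is swapped out at most twice. Summing the inequalities $\cost(\sol\setminus\{s(\opt)\}\cup\{\opt\})-\cost(\sol)\geq 0$ over these swaps, with the left-hand side suitably bounded from above, will give the claim.

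For the change in $A$ I would reuse the standard reassignment argument: after swapping $s(\opt)$ for $\opt$, reassign every client in $\servedset{\optimal}(\opt)$ to $\opt$, and reassign every other client of $s(\opt)$ to the facility of $\sol$ nearest to the optimal facility that serves it --- which, by the defining property of the test swaps, is still open. Two applications of the triangle inequality bound the change in $A$ produced by one swap, and summing over the $k$ test swaps (each $\opt$ used once, each facility of $\sol$ used at most twice) the solution-side terms telescope, giving $\sum_{\text{swaps}}\big(A(\sol\setminus\{s(\opt)\}\cup\{\opt\})-A(\sol)\big)\leq 5\,A(\optimal)-A(\sol)$, exactly as for plain \kmedian.

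The new ingredient is the change in $B$. A direct computation shows that swapping $\afacility$ for $\bfacility$ changes the disagreement term by exactly $\lambda\sum_{\x\in\sol\setminus\{\afacility\}}\big(\distance(\bfacility,\x)-\distance(\afacility,\x)\big)$, and the triangle inequality gives $\distance(\bfacility,\x)-\distance(\afacility,\x)\leq\distance(\afacility,\bfacility)$, so this change is at most $\lambda(k-1)\,\distance(\afacility,\bfacility)$. Summing over the test swaps, adding the bound on the change in $A$, and using that the total cost change is nonnegative, yields $A(\sol)\leq 5\,A(\optimal)+\lambda(k-1)\sum_{\opt\in\optimal}\distance(\opt,s(\opt))$. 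It then remains to control $\sum_{\opt}\distance(\opt,s(\opt))$ and to add back $B(\sol)$ in order to recover $\cost(\sol)$; for the former I would again invoke the triangle inequality, routing $\distance(\opt,s(\opt))$ either through a client served by $\opt$ or through the facility of $\sol$ closest to $\opt$, so as to express the sum in terms of $A(\optimal)$, $A(\sol)$, and the disagreement terms, and then use $\cost(\optimal)\geq A(\optimal)$ and $\cost(\optimal)\geq B(\optimal)$ to replace all optimal-side quantities by $\cost(\optimal)$.

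The step I expect to be the main obstacle is this final accounting. Because the per-swap bound on the change in $B$ carries a factor of $k$, a loose estimate of $\sum_{\opt}\distance(\opt,s(\opt))$ brings solution-side quantities ($A(\sol)$ or $B(\sol)$) back onto the right-hand side with a coefficient that need not be smaller than $1$, and then the rearrangement collapses. The delicate point is to choose the test swaps --- following Arya's capture-based construction --- and the triangle-inequality detours tightly enough that every solution-side term appearing on the right has a coefficient small enough to be absorbed into the left-hand side, leaving precisely the factor $\lambda k+5$.
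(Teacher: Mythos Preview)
Your overall strategy---Arya's test swaps, the standard bound $\sum_{\text{swaps}}(A(\cdot)-A(\sol))\leq 5A(\optimal)-A(\sol)$, and separate treatment of the disagreement term---matches the paper's. The gap is in how you handle the change in $B$.

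You bound $B(\sol\setminus\{\afacility\}\cup\{\opt\})-B(\sol)\leq\lambda(k-1)\,\distance(\afacility,\opt)$ via $\distance(\opt,\x)-\distance(\afacility,\x)\leq\distance(\afacility,\opt)$. This is correct but throws away exactly the piece you need. The exact change is
\[
\lambda\!\!\sum_{\x\in\sol\setminus\{\afacility\}}\!\!\distance(\opt,\x)\;-\;\lambda\!\!\sum_{\x\in\sol\setminus\{\afacility\}}\!\!\distance(\afacility,\x),
\]
and the second (negative) sum, accumulated over the $k$ test swaps, is what produces $B(\sol)$ on the left-hand side. Concretely, since every $\afacility\in\sol$ is swapped out at most twice, and any pair $\distance(\afacility_i,\afacility_j)$ that never appears can be recovered from two that do via the triangle inequality, one has $\sum_{(\opt,\afacility)\in\Sigma}\sum_{\x\in\sol}\distance(\afacility,\x)\geq \frac{1}{\lambda}B(\sol)$. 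Once you discard this term, you are left with only $A(\sol)$ on the left and no mechanism whatsoever to bound $B(\sol)$; your plan to ``add back $B(\sol)$'' by routing $\distance(\opt,s(\opt))$ cannot work, because nothing in the remaining inequalities controls the internal pairwise distances of $\sol$. This is precisely the collapse you anticipate in your last paragraph, and it is structural, not a matter of choosing detours more carefully.

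The remedy is to keep the two halves of the $B$-change separate, as the paper does for Theorem~\ref{the:approx}. Lower-bound the negative half by $B(\sol)/\lambda$ as above, so that $A(\sol)+B(\sol)$ sits on the left. Then upper-bound the positive half $\sum_{(\opt_i,\afacility_j)\in\Sigma}\sum_{\x\neq\afacility_j}\distance(\opt_i,\x)$ by routing each $\distance(\opt_i,\x)$ through some $\opt_h\in\optimal$ and a client in $\servedset{\optimal}(\opt_h)\cap\servedset{\sol}(\x)$; in the general case (no lower bound on cluster sizes) the same client may be reused across the $k$ values of $\opt_i$, which is where the factor $k$ enters and yields the $(\lambda k+5)$ guarantee.
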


Furthermore, we are able to improve the analysis 
and obtain an approximation guarantee that does not depend 
on the number of facilities $k$. 
For the improved result we need to make the mild assumption that
the number of clients in any cluster of the optimal solution and the solution
returned by the algorithm is $\Omega(\lambda k)$.
In particular, we have.
\begin{theorem}
\label{the:approx}
Let $(\clients,\clients,k,\lambda)$ be an instance of the \newkmedian problem. 
Let \sol be a solution returned by the \lsalgo algorithm, 
and let \optimal be an optimal solution.
Assume that 
$|\servedset{\sol}(\afacility)| \geq \lceil 2\lambda \rceil k$,
$|\servedset{\optimal}(\opt)| \geq \lceil 2\lambda \rceil k$
and
$|\servedset{\sol}(\afacility)| = |\servedset{\optimal}(\opt)|$
  for all $\afacility\in\sol$, $\opt\in\optimal$. 
Then
\begin{equation*}
\cost(\sol) \leq \max\{11, 4\lambda\}\, \cost(\optimal).
\end{equation*}
\end{theorem}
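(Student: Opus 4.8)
Here is the proof plan.

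I would adapt the local-search analysis of Arya et al.~\cite{arya2004local} for \kmedian, tracking the extra disagreement term throughout. Write $\cost(\sol)=\cost_1(\sol)+\cost_2(\sol)$, where $\cost_1(\sol)=\sum_{\aclient\in\clients}S_\aclient$ is the service cost, $S_\aclient=\min_{\afacility\in\sol}\distance(\aclient,\afacility)$, and $\cost_2(\sol)=\frac\lambda2\sum_{\facilityi,\facilityj\in\sol}\distance(\facilityi,\facilityj)$; set $O_\aclient=\min_{\opt\in\optimal}\distance(\aclient,\opt)$, so $\cost_1(\optimal)=\sum_{\aclient}O_\aclient$. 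First I would invoke the standard swap-pair construction of Arya et al.: a family of $k$ pairs $(\afacility,\opt)$ with $\afacility\in\sol$, $\opt\in\optimal$, in which every optimal center appears in exactly one pair, every facility of \sol\ appears in at most two pairs, and in each pair \afacility\ is the closest \sol-facility to at most one optimal center --- and, if so, that center is \opt. The last property guarantees that when \afacility\ is swapped out for \opt\ every client previously served by \afacility\ can be rerouted to a facility still open in \sol.

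For each swap pair $(\afacility,\opt)$, local optimality gives $0\le\cost(\sol\setminus\{\afacility\}\cup\{\opt\})-\cost(\sol)=\Delta_1(\afacility,\opt)+\Delta_2(\afacility,\opt)$, the changes in $\cost_1$ and $\cost_2$. As in Arya et al., rerouting the clients of \servedset{\optimal}(\opt) to \opt\ and the remaining clients of \servedset{\sol}(\afacility) to the closest-in-\sol\ facility of their own optimal center gives $\Delta_1(\afacility,\opt)\le\sum_{\aclient\in\servedset{\optimal}(\opt)}(O_\aclient-S_\aclient)+\sum_{\aclient\in\servedset{\sol}(\afacility)\setminus\servedset{\optimal}(\opt)}2O_\aclient$, and summing over the $k$ pairs (each optimal center once, each \sol-facility at most twice) bounds $\sum\Delta_1$ by $5\cost_1(\optimal)-\cost_1(\sol)$. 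For the disagreement change, the triangle inequality gives $\distance(\opt,\afacility')-\distance(\afacility,\afacility')\le\distance(\afacility,\opt)$ for every $\afacility'\in\sol$, hence $\Delta_2(\afacility,\opt)\le\lambda(k-1)\distance(\afacility,\opt)$. The new ingredient is to convert $\lambda k\,\distance(\afacility,\opt)$ into served-client costs using the hypothesis $|\servedset{\optimal}(\opt)|\ge\lceil2\lambda\rceil k$: for the pairs in which \afacility\ is the \sol-facility closest to \opt, every $\aclient\in\servedset{\optimal}(\opt)$ satisfies $\distance(\afacility,\opt)\le S_\aclient+O_\aclient$ (travel from \opt\ to the \sol-facility serving \aclient, then to \aclient), so averaging over the $\ge\lceil2\lambda\rceil k$ such clients gives $\lambda k\,\distance(\afacility,\opt)\le\frac{\lambda}{\lceil2\lambda\rceil}\sum_{\aclient\in\servedset{\optimal}(\opt)}(S_\aclient+O_\aclient)\le\frac12\sum_{\aclient\in\servedset{\optimal}(\opt)}(S_\aclient+O_\aclient)$; the remaining pairs (whose \afacility\ is not closest to \opt) are handled analogously but through the clients of \servedset{\sol}(\afacility), which is nonempty by the same hypothesis.

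Summing $0\le\Delta_1(\afacility,\opt)+\Delta_2(\afacility,\opt)$ over the $k$ pairs and using once more that each optimal center occurs in exactly one pair yields $0\le(5\cost_1(\optimal)-\cost_1(\sol))+\tfrac12(\cost_1(\sol)+\cost_1(\optimal))$, i.e.\ $\cost_1(\sol)\le11\,\cost_1(\optimal)$. It remains to control $\cost_2(\sol)$, where I would again use $|\servedset{\sol}(\afacility)|\ge\lceil2\lambda\rceil k$: bounding each $\distance(\facilityi,\facilityj)$ by a chain through a client of \servedset{\sol}(\facilityi), its optimal center, the optimal center of a client of \servedset{\sol}(\facilityj), and that client, then averaging over the many clients of each cluster, absorbs the ``direct'' contributions into $\cost_1(\sol)$ and $\cost_1(\optimal)$ (the large cluster sizes cancel the factor $k$) and leaves a weighted quadratic form $\frac\lambda2\sum_{\opt,\opt'}w(\opt)\,w(\opt')\,\distance(\opt,\opt')$ to be compared with $\cost_2(\optimal)$; using local optimality once more, via single-facility swaps of each $\afacility\in\sol$ against a nearby optimal center, bounds the weights $w(\cdot)$ and gives $\cost_2(\sol)\le\max\{11,4\lambda\}\,\cost_2(\optimal)$, the $4\lambda$ branch arising in the regime where the disagreement term dominates. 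Combining, $\cost(\sol)=\cost_1(\sol)+\cost_2(\sol)\le\max\{11,4\lambda\}\,(\cost_1(\optimal)+\cost_2(\optimal))=\max\{11,4\lambda\}\,\cost(\optimal)$.

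The steps I expect to be the main obstacle are the bound on $\cost_2(\sol)$ and, inside the swap analysis, the treatment of the \sol-facilities that are closest to two or more optimal centers (for whose swap pairs $\distance(\afacility,\opt)$ must be bounded through \servedset{\sol}(\afacility) rather than \servedset{\optimal}(\opt), with a careful enough pairing so that the resulting terms still telescope). Unlike the service cost, the disagreement of \sol\ is not pinned down by a single family of swap inequalities, and a crude triangle-inequality bound on the pairwise distances inside \sol\ loses a factor $k$; the cluster-size hypothesis $|\servedset{\sol}(\afacility)|,|\servedset{\optimal}(\opt)|\ge\lceil2\lambda\rceil k$ is exactly what averts this blow-up, since it forces the $k$ representatives of \sol\ to be spread in a manner compatible with the optimal clustering, so that local optimality rules out configurations in which many of them pile up far from the bulk of \optimal.
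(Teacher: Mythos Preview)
Your plan diverges from the paper at the crucial step and the divergence creates a real gap. When you bound the disagreement change by $\Delta_2(\afacility,\opt)\le\lambda(k-1)\,\distance(\afacility,\opt)$, you are applying $\distance(\opt,\afacility')-\distance(\afacility,\afacility')\le\distance(\afacility,\opt)$ term by term and thereby \emph{discarding} the negative contributions $-\sum_{\afacility'}\distance(\afacility,\afacility')$. That is exactly the information needed to control $\cost_2(\sol)$: those negative pieces, summed over the $k$ swap pairs, are what the paper shows to be at least $g(\sol)$. Once you throw them away you are forced into a separate argument for $\cost_2(\sol)$, and the one you sketch does not work. In particular, the inequality $\cost_2(\sol)\le\max\{11,4\lambda\}\,\cost_2(\optimal)$ that you aim for is false in general: if all optimal centers coincide then $\cost_2(\optimal)=0$, while a locally optimal $\sol$ can still have $\cost_2(\sol)>0$. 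Any valid bound on $\cost_2(\sol)$ must bring in $\cost_1(\optimal)$, and your ``weighted quadratic form'' paragraph does not explain how local optimality would produce such a cross-term bound.

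The paper's route avoids this altogether by never separating $\cost_1$ and $\cost_2$. It writes $\Delta_2(\afacility_j,\opt_i)=\lambda\bigl(\sum_{\afacility\neq\afacility_j}\distance(\opt_i,\afacility)-\sum_{\afacility}\distance(\afacility_j,\afacility)\bigr)$ and handles the two sums separately. For the negative sum, since every $\afacility\in\sol$ appears in at most two swap pairs, a triangle-inequality patching argument shows $\sum_{(\opt_i,\afacility_j)}\sum_{\afacility}\distance(\afacility_j,\afacility)\ge g(\sol)$, so $g(\sol)$ appears directly on the right-hand side of the aggregated swap inequality. For the positive sum, each $\distance(\opt_i,\afacility)$ is bounded by a chain $\distance(\opt_i,\opt_{h_\afacility})+\distance(\opt_{h_\afacility},x)+\distance(x,\afacility)$ with $x\in\servedset{\optimal}(\opt_{h_\afacility})\cap\servedset{\sol}(\afacility)$; the cluster-size hypothesis $|\servedset{\sol}(\afacility)|,|\servedset{\optimal}(\opt)|\ge\lceil2\lambda\rceil k$ is used combinatorially to guarantee enough \emph{distinct} clients $x$ so that, across all $k(k-1)$ chains, every $\distance(\opt_i,\opt_j)$ is used at most twice and every client-to-center distance at most once. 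This yields $\sum_{(\opt_i,\afacility_j)}\sum_{\afacility\neq\afacility_j}\distance(\opt_i,\afacility)\le 2g(\optimal)+f(\optimal)+f(\sol)$, and after inserting $\lambda$ (by repeating the replacement $\lceil2\lambda\rceil$ times) one gets $11f(\optimal)+4\lambda g(\optimal)\ge f(\sol)+g(\sol)$ in a single stroke. The role of the hypothesis is thus not to ``average'' $\distance(\afacility,\opt)$ over clients as you propose, but to furnish a system of $k^2$ client witnesses for the triangle-inequality chains.
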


The running time of the \lsalgo algorithm is $\bigO(\noclients\nofacilities k)$ per iteration. 
For most applications $k$ is considered to be a constant. 
When the number of facilities \nofacilities is of the same order of magnitude with the number of clients \noclients, 
e.g., in the important special case $\facilities=\clients$, 
the algorithm has quadratic complexity per iteration.
However, in many applications the number of facilities is significantly smaller than
the number of clients. Thus, we expect that the algorithm is very efficient in practice for those cases.

We also note that Arya et al.~\cite{arya2004local} show how to perform swaps of $p$ facilities simultaneously
and obtain an improved performance guarantee at the expense of increased running time. 
In our case, the penalty term in our objective, 
which captures the disagreement among the cluster representatives, 
makes the analysis significantly more complex
and it is not clear how to use simultaneous swaps in order to 
achieve a similar quality-performance trade-off.
Thus, this direction is left for future work.

\section{Experimental evaluation}
\label{sec:experiments}

We perform experiments to assess the proposed concept of clustering with reconciliation of representatives, 
as well as the performance of the proposed \lsalgo algorithm.\footnote{Our implementation of the algorithm is available at \url{https://github.com/brunez/recon-kmedian-ls}}
Our objective is to evaluate by some objective measure 
whether the proposed problem formulation, as well as natural variations, 
can lead to a choice of representatives or sources that are more moderate, less polarizing and more likely 
to reach consensus.

To enrich the experimental setting and produce a more interesting set of empirical observations, we relax some of the requisites of our theoretical results. Namely, observe that in order to prove the approximation guarantees of the \lsalgo algorithm, we require that the distance function satisfy the properties of a metric, and that it be the same for facilities and clients. However, we believe that in practical scenarios, one might benefit from considering a wider set of options, especially if we consider the exploratory nature of clustering algorithms. Therefore, we measure dissimilarity between objects using functions that do not necessarily qualify as metrics, and we consider different ones for facilities and clients.

At a high level, our experimental methodology is as follows: 
We start with a dataset for which clients and facilities model a 
natural clustering problem and for which a distance function \distance is available. 
For the facilities of the dataset we seek to obtain a polarity score $\pi$, 
which is independent of the distance function \distance:
facilities with similar polarity scores $\pi$ are more likely to agree. In addition, facilities with scores closer to the middle of the spectrum are less likely to disseminate extremist ideologies. 
We then apply our clustering algorithm with varying values of the parameters $k$ and $\lambda$.
We are interested in answering the following questions:

\begin{itemize}
\item[{\bf Q1.}]
How does the agreement between selected representatives or the polarization of information sources
(measured by the independent polarity score $\pi$) 
change as a function of $\lambda$? 
In other words, can we get more reconciled representatives or less extreme sources by increasing
the weight of the disagreement penalty term
(second term of the objective function~(\ref{equation:objective-2}))?

\item[{\bf Q2.}]
How does the \kmedian score change as a function of $\lambda$? 
In other words, can we find solutions with more reconciled representatives
but without significant loss in representation fidelity
(i.e., first term of the objective function~(\ref{equation:objective-2}))?

\item[{\bf Q3.}]
What is the impact of the parameter $k$ on both polarity score and \kmedian score?
\end{itemize}

\begin{table*}[t]
\begin{center}
  \caption{Summary of the datasets.}
  \vspace{-3mm}
  \label{tab:datasets}
  \centering
  \begin{tabular}{lrrc}
    \toprule
    \multicolumn{1}{c}{Name}  & \multicolumn{1}{c}{Number of}   & \multicolumn{1}{c}{Number of} & \multicolumn{1}{c}{Distance functions} \\
                              & \multicolumn{1}{c}{clients}     & \multicolumn{1}{c}{facilities} & \\
    \midrule
    \twitterdata &  $3\,302\,362$ & $500$ & Shortest path, Spectral embedding $+$ Euclidean    \\
    \congressdata &   $420$  & $420$ & Euclidean   \\
    \domainsdata &   $6\,104$  & $469$ & Weighted Jaccard $+$ Mentions, Latent space $+$ Euclidean   \\
    \bottomrule
  \end{tabular}
\end{center}
\end{table*}

For our experimental evaluation we use the following datasets.\footnote{%
The datasets are available at https://doi.org/10.5281/zenodo.2573954}

\smallskip
\noindent
{\twitterdata}: The dataset, obtained by Lahoti et al.~\cite{lahoti2018joint},
consists of a set of politically active Twitter accounts. 
We remove stubs --- i.e., accounts that follow only one account and have no followers --- 
resulting in $3\,302\,362$ accounts. 
Out of those we consider 500 popular ones --- with at least $50\,000$ followers --- 
as candidate facilities, that is, representatives. 
As remarked in the beginning of section \ref{sec:experiments}, we can extend the proposed framework by considering different metrics for the two terms of the objective function. 
This corresponds to a practical setting where the agreeability of the selected representatives 
is measured differently than their affinity to their respective consituents. 
Specifically, for this dataset we consider the following distance functions.

\begin{enumerate}
\item \textbf{Facility-Client:}
We compute distances between facilities and clients
as the length of the shortest path between two Twitter 
accounts in the undirected follower Twitter graph.\footnote{The follower graph corresponds to a snapshot taken in July 2017.}

\item \textbf{Facility-Facility:}
To compute distances between facilities we use shortest-path distances, as before. 
We also use Euclidean distances on the spectral embedding with $\spectralcomps$ components, 
as described by Belkin et al.~\cite{belkin2002laplacian}. 
We scale the resulting distance matrix so that the average of all entries is equal 
to that of the shortest-path distance matrix. 
This way we ensure that the magnitude of $\lambda$ has an equivalent effect using the different metrics.
\end{enumerate}

\smallskip
\noindent
{\congressdata}: We collect roll call voting records from the present US Congress 
using the public domain \texttt{congress} API.\footnote{https://github.com/unitedstates/congress/wiki} 
We build a dataset where each row corresponds to a Congress representative 
and each column is a binary variable representing the issue being voted. 
Missing values are imputed using class-conditional means, 
where the classes we consider are the two parties: democrats and republicans. 
``{\em Present}'' and ``{\em Not voting}'' votes are considered to be missing. 
We omit votes where all representatives are missing. 
We also omit representatives for whom we could not obtain an ideology estimate using the approach described below, 
or who missed too many votes. 
For this dataset, we use the Euclidean distance between the vectors
corresponding to each of the representatives. To make this experiment
closer to a plausible practical scenario, we restrict half of the
facilities to be democrats and the other half to be republicans. In
addition, clients are served by the closest facility of the same party.

\smallskip
\noindent
{\domainsdata}:  We combined the domain-related data described in the work of Bakshy et al. \cite{bakshy2015exposure} with the \twitterdata dataset. The set of facilities consisted of 469 domains hosting the news sources most often shared on the Facebook social network. The client set is comprised of 6\,104 of the most politically active Twitter users. We consider two alternatives for computing the distances.

\begin{enumerate}
\item \textit{Mentions}: Given a facility $f$ and a client $c$, let $n_{cf}$ be the number of times a tweet by $c$ contains a link to $f$. Then $d(f,c)=(n_{cf} + 1)^{-1}$. To compute the pairwise distances between facilities we do the following. Consider two facilities, $f$ and $g$. Let $S_f$ (respectively $S_g$) be the set of clients that have tweeted a link to $f$ (respectively $g$) at least once. 
We define $W = \sum_{c \in S_f\cup S_g}\left ( \log n_{cf}\mathbb I \{c\in S_f\} + \log n_{cg} \mathbb I \{c\notin S_f\}\right )$, where $\mathbb I$ is the indicator function. 
Then
\begin{equation}
    \label{eq:jacc_dist}
    d(f,g) = 1-\frac{\sum_{c \in S_f\cap S_g}\log n_{cf}}{W}.
\end{equation}
We define $\log 0=0$. Since the objective function of \newkmedian (Equation (\ref{def:prob_clust})) sums over all ordered pairs of facilities, this distance function is in effect symmetric when applied to our problem.
Note that this is akin to the Jaccard index for set similarity, but each element is weighted with a measure of its relevance.

\item 
\textit{Latent}: We construct a matrix $A$ such that $A_{ij}$ is the number of times a tweet by user $j$ contains a link to domain $i$. We compute the singular value decomposition $A=U\Sigma V^T$ and extract the latent representation for both domains and users in the first 9 components (which account for 50\% of the total Frobenius norm of $A$). If $k=9$, domain $i$ is represented as $U_{i,:k}$ and user $j$ as $V_{i,:k}$. To compute both facility-facility and 
facility-client distances we take the Euclidean distances between the corresponding latent representations.
\end{enumerate}

The characteristics of the datasets are summarized in Table~\ref{tab:datasets}.

For all datasets, in order to compute the objective of \newkmedian we take averages instead of the sums of distances. Note that this amounts to scaling both sums, so it is equivalent to setting $\lambda$ to a particular value. The advantage of taking averages is that $\lambda$ has an impact at small values, i.e., at ``small'' factors of 1.

\spara{Ground truth polarity scores ($\pi$).} 
To measure the polarity scores of the facilities we employ different methods depending on the dataset.

In the case of \twitterdata and \congressdata we use the approach described by Barber\'a~\cite{barbera2014birds}, 
which estimates the ideological leaning of a Twitter account as a real value. 
For \twitterdata, we use polarity scores collected at the same time as the follower graph (July 2017). 
For \congressdata, we collected the ideological estimates in May 2018. 
Using this method and the proposed datasets, all the elicited polarity scores are between -3 and 3. 
We measure the polarity of the chosen representatives as follows. 
Given a solution $\sol=\{\afacility_1, \dots, \afacility_k\}$, 
let $\pi(\afacility_i)$ denote the estimated polarity of facility $\afacility_i$. 
We define the polarity of solution $\sol$ as the sample standard deviation of the set 
$\{\pi(\afacility_1), \dots, \pi(\afacility_k)\}$.

For \domainsdata, we use the ideological leaning score associated to each domain as described in the work of Lahoti et al.~\cite{lahoti2018joint}. These scores were computed roughly as the fraction of interactions (visits or shares) by conservative users, out of total interactions. We translated the scores so that they fall between -0.5 (left) and 0.5 (right). In this scenario, we are interested in choosing less polarized sources. We therefore measure the polarity of the chosen set as the $\ell$-2 norm of the vector $(\pi(\afacility_1), \dots, \pi(\afacility_k))$.

\begin{figure*}[t]
  \centering
  \includegraphics[width=\textwidth]{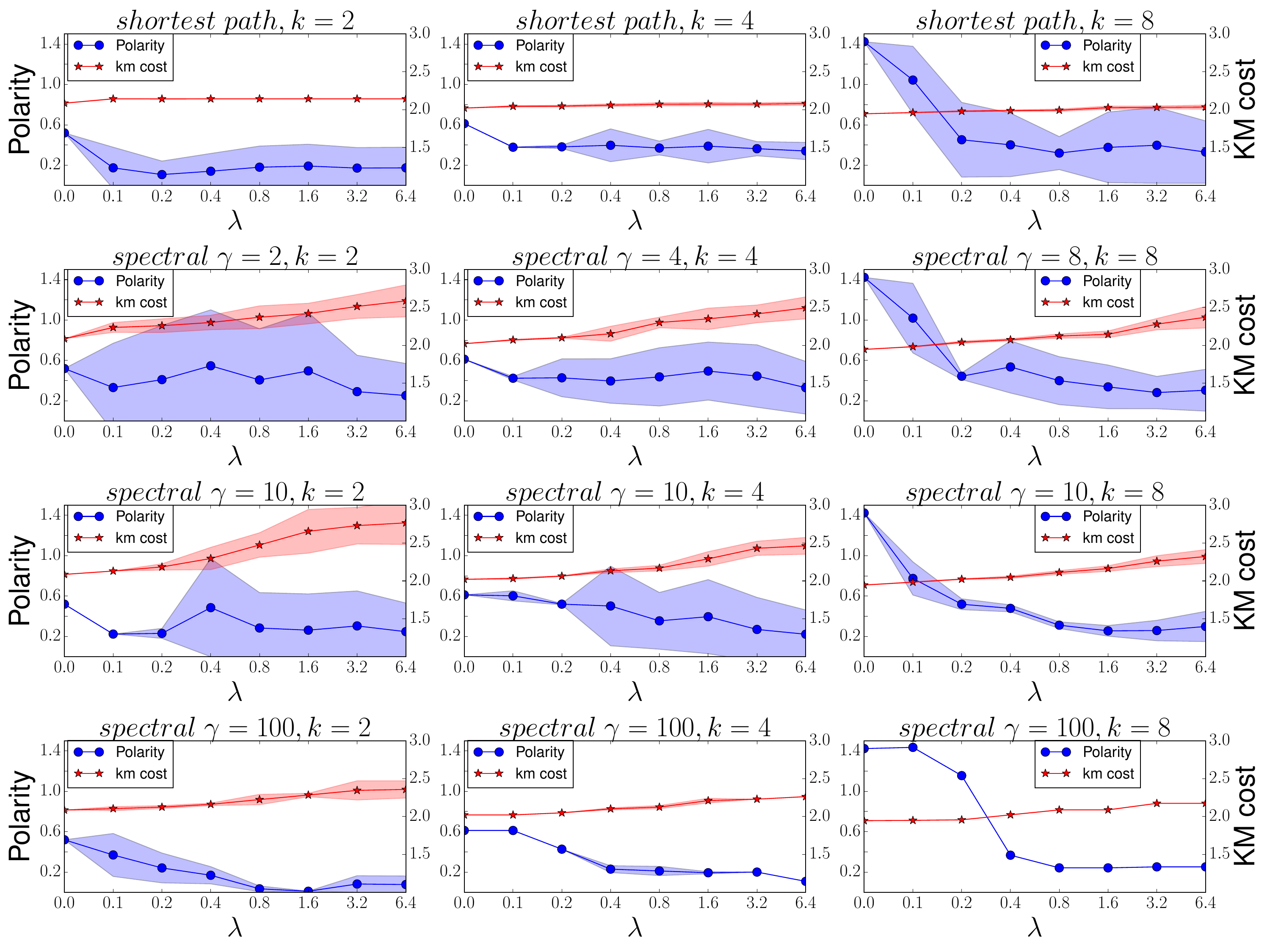}
  \caption{\label{fig:twitter}Results on the \twitterdata dataset with different metrics and values of $k$.}
\end{figure*}

\begin{figure*}[t]
  \centering
  \includegraphics[width=\textwidth]{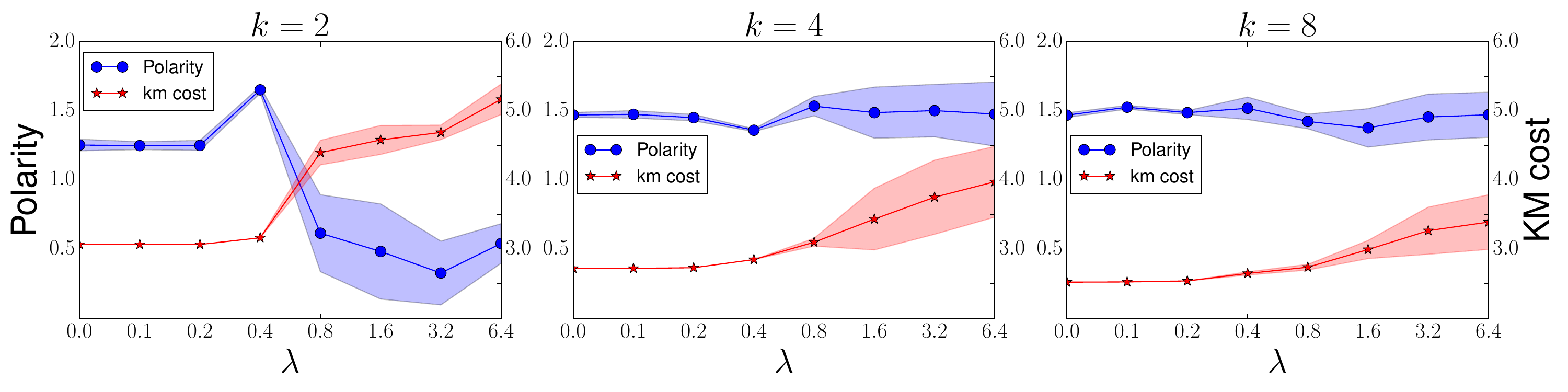} 
  \caption{\label{fig:congress}Results on the \congressdata dataset for different values of $k$.} 
\end{figure*}

\begin{figure*}[t]
  \centering
  \includegraphics[width=\textwidth]{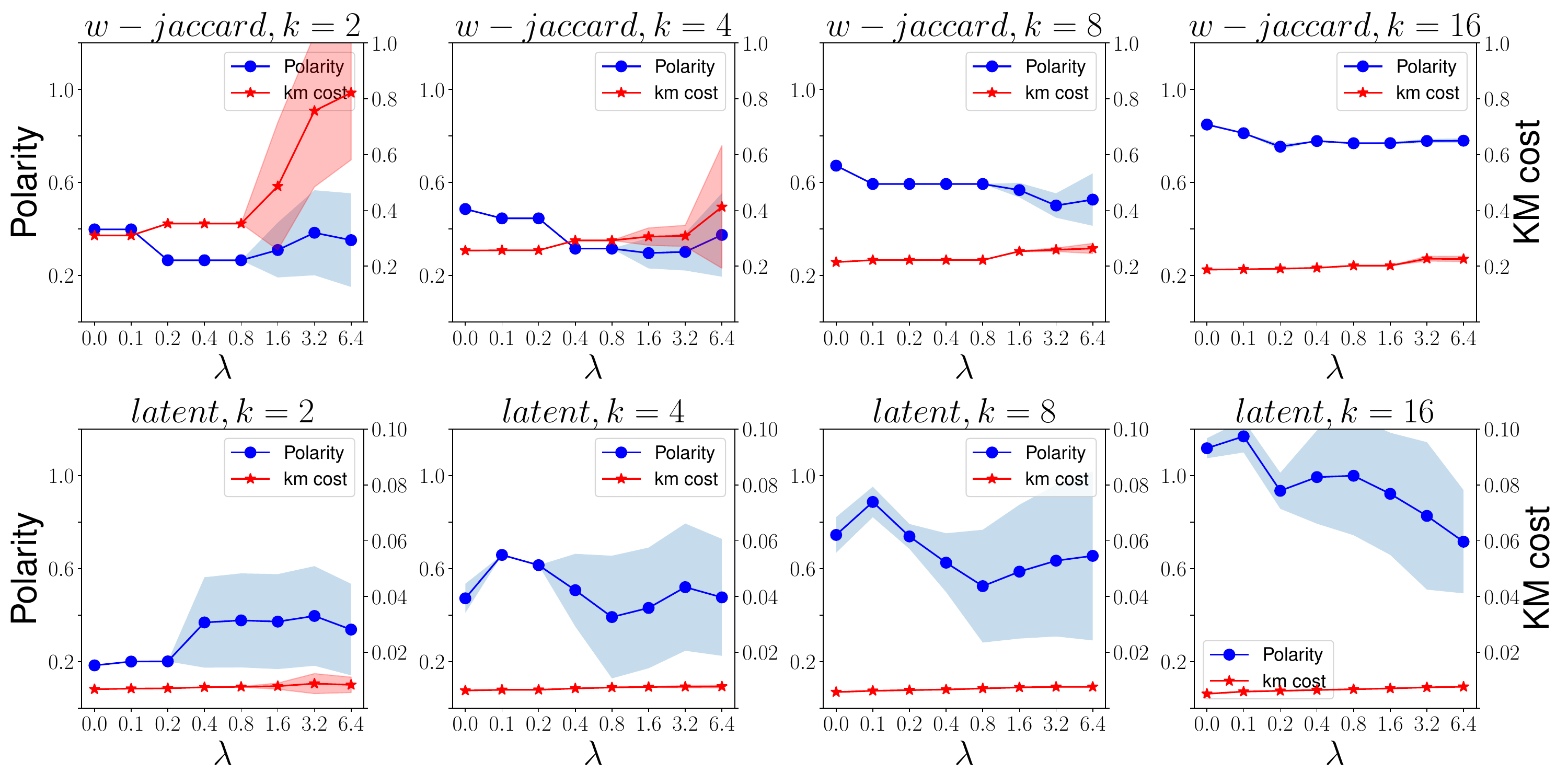}
  \caption{\label{fig:domains}Results on the \domainsdata dataset with different metrics and values of $k$.}
\end{figure*}

\spara{Results on \twitterdata dataset.}
We run the algorithm on the \twitterdata dataset setting the number of selected facilities to $k=2,4,8$ and 
$\lambda=0,2^{i}/10$ for $i=1,\dots,6$. 
For computing the pairwise distances between facilities we use either the shortest path metric 
or the spectral embedding with $\gamma=k,10,100$. 
Figure~\ref{fig:twitter} illustrates the results. 
We depict, as a function of $\lambda$, the polarity of the chosen representatives, 
measured as described above, along with the $k$-median cost of the solution --- 
i.e., the first term of the objective function~(\ref{equation:objective-2}).
We run the algorithm 40 times for each setting and report the average and standard deviation bands. 
We can see that increasing the value of $\lambda$ leads to significantly less polarized representatives in various cases. 
The effect is most noticeable for larger values of $k$, and 
particularly consistent using the spectral embedding with 100 components. 
An interesting result on this dataset is that we can achieve significant decreases in polarity 
without incurring much additional $k$-median cost. That is, 
it is possible to elect a much more agreeable committee --- with respect to the chosen polarity measure --- 
without notable loss in representation fidelity.

\spara{Results on \congressdata dataset.}
We run the algorithm on the \congressdata dataset, 
using the same configurations as for \twitterdata. 
Figure~\ref{fig:congress} illustrates the results. 
Here, the decrease in polarity is only clear in the case $k=2$, for values of $\lambda$ at least 0.8. 
It should be noted, however, that the voting data and the polarity scores come from completely different sources. 
It would therefore be interesting to carry out further experiments with these data.

\spara{Results on \domainsdata dataset.}
We run the algorithm on the \domainsdata dataset, using the same configurations as for \twitterdata but considering $k=16$ as well, as in the case of news sources it is practical to consider larger sets. Figure~\ref{fig:domains} shows the results, using the distance function defined in Equation (\ref{eq:jacc_dist}) (\textit{w-jaccard}) and the latent representation (\textit{latent}). We run each configuration 80 times and report average results and standard deviation bands. The reduction in polarity is noticeable, in particular using the latent representation with larger values of $k$. For very small sets of news sources (e.g., $k=2$) the method does not exhibit a reduction in polarity. In order to gain further insight on the impact of the penalty term, we report an example of the news sources that appear more frequently as $\lambda$ increases. Specifically, we take a case where decrease in polarity is noticeable (\textit{latent}, $k=8$, $\lambda=0.8$). We then collect the 16 sources that appear the most, and do the same for $\lambda=0$. The results are shown in Table~\ref{tab:topdomains}. For each domain, we report the frequency (i.e., the fraction of times it was part of the solution out of the 80 runs), the ideological score and the number of times it was mentioned in the collected tweets.

\begin{table*}[t]
  \begin{center}
      \caption{Top-16 news sources for different values of $\lambda$, using latent representations.}
      \vspace{-3mm}
      \label{tab:topdomains}
      \begin{tabular}{lrrr}
        \toprule
        \multicolumn{4}{c}{$\lambda = 0$}   \\ \midrule 
        Domain & Frequency & Ideology & Mentions  \\ \midrule
        nydailynews.com       & 1.000 & -0.114 & 10\,191  \\
        politico.com          & 1.000 & -0.073 & 36\,184  \\
        slate.com             & 1.000 & -0.341 & 14\,364  \\
        cbsnews.com           & 0.687 & -0.057 &  4\,394  \\
        buzzfeed.com          & 0.687 & -0.262 & 11\,683  \\
        twitchy.com           & 0.537 &  0.469 & 13\,192  \\
        westernjournalism.com & 0.462 &  0.450 &  3\,562  \\
        9news.com             & 0.462 & -0.016 &     349  \\
        politifact.com        & 0.350 & -0.240 &  3\,097  \\
        cbsloc.al             & 0.337 & -0.081 &  2\,526  \\
        christianpost.com     & 0.337 &  0.337 &     383  \\
        theatlantic.com       & 0.312 & -0.176 &  6\,883  \\
        newrepublic.com       & 0.312 & -0.335 &  1\,626  \\
        lifenews.com          & 0.200 &  0.483 &  3\,657  \\
        6abc.com              & 0.200 & -0.252 &     819  \\
        usatoday.com          & 0.112 & -0.064 & 18\,513 \\
        \bottomrule
      \end{tabular}
      \hspace{2mm}
      \begin{tabular}{lrrr}
        \toprule
        \multicolumn{4}{c}{$\lambda=0.8$} \\ \midrule   
        Domain & Frequency & Ideology & Mentions \\ \midrule
        chicagotribune.com & 0.962 & -0.082 & 1\,531 \\
        chron.com          & 0.475 &  0.170 &    431 \\
        abc13.com          & 0.325 &  0.005 &    255 \\
        9news.com          & 0.250 & -0.016 &    349 \\
        detroitnews.com    & 0.225 &  0.090 &    535 \\
        azc.cc             & 0.225 & -0.028 &    744 \\
        nbcwashington.com  & 0.225 & -0.214 &    485 \\
        csmonitor.com      & 0.225 & -0.030 &    382 \\
        wjla.com           & 0.225 & -0.160 &    374 \\
        msn.com            & 0.200 & -0.031 &    615 \\
        kgw.com            & 0.187 & -0.118 &    107 \\
        christianpost.com  & 0.175 &  0.336 &    383 \\
        abc7chicago.com    & 0.175 & -0.251 &    328 \\
        inquisitr.com      & 0.175 &  0.049 & 2\,150 \\
        stripes.com        & 0.175 &  0.182 &    555 \\
        wsbtv.com          & 0.137 & -0.043 &    167 \\
        \bottomrule
      \end{tabular}
  \end{center}
\end{table*}

\spara{Number of iterations.}
Even though the time complexity of the local-search algorithm per iteration 
is not too high, a legitimate concern to be raised is the possibility that
it might require a large number of iterations to converge. 
Our observations, however, suggest that in practice a small number of
iterations --- where by iteration we understand the inspection of all
candidate changes --- are necessary. Table \ref{tab:iterations} shows
the average and maximum number of iterations for the \twitterdata dataset. 

\begin{table}[t]
\begin{center}
  \caption{Number of iterations on \twitterdata dataset~(avg/max).}
  \vspace{-3mm}
  \label{tab:iterations}
  \centering
  \begin{tabular}{lrrc}
    \toprule
    \cmidrule(r){1-2}
    \multicolumn{1}{c}{Metric}  & \multicolumn{1}{c}{$k=2$}   & \multicolumn{1}{c}{$k=4$} & \multicolumn{1}{c}{$k=8$} \\                             
    \midrule
    Shortest path        &   $2.5 / 4$    &  $2.8/5$ &  $3.09/7$   \\
    Spectral, $\gamma=2$ &   $2.18/3$  & -  &  -  \\
    Spectral, $\gamma=4$ &   -  & $2.73/5$  &  -  \\
    Spectral, $\gamma=8$ &   -  & -  &  $3.13/6$  \\
    Spectral, $\gamma=10$ &   $2.32/4$  & $2.8/6$  &  $3.32/7$  \\
    Spectral, $\gamma=100$ &   $2.45/4$  & $2.75/5$  &  $2.58/4$  \\
    \bottomrule
  \end{tabular}
\end{center}
\end{table}

\section{Conclusions}
\label{section:conclusions}

We have considered the problem clustering data
so as to optimize the total representation cost
plus an additive term to penalize disagreement among the chosen representatives.
The proposed problem, which we name reconciliation \kmedian, 
has applications in summarizing data with non-polarized representatives, 
as well as in electing $k$-member committees that are more likely to reach consensus. 
We have shown the proposed problem to be \np-hard and derived bounds on the objective.
Inspired by the literature on related problems, we have analyzed a
local-search algorithm in this context and derived approximation guarantees, 
of factor $\bigO(\lambda k)$ in the general setting, and constant under mild assumptions. 
Through experiments on real data coming from a social network and voting records, we have shown
empirically how the proposed formulation can lead to the choice of less polarized groups of
representatives, as measured by a well-known method for ideology estimation, as well as less ideologically-extreme sets of news sources. 
This work opens various enticing directions for future inquiry. 
First it would be interesting to determine whether the
approximation guarantees can be improved, as well as to attempt to
find tight examples to know the possible extent of said improvement.
Second, it would be interesting to perform further experiments on
similar and other datasets. It is particularly compelling to
improve our understanding of how different metrics can interact with known
methods for estimating polarization.

\mpara{Acknowledgements.} This work was supported by
three Academy of Finland projects  (286211, 313927, and 317085), 
and the EC H2020 RIA project ``SoBigData'' (654024).

\section*{Appendix}
\subsection{Hardness results}
Here we prove Lemma~\ref{lemma:mdp-np-hardness}, 
which is a key ingredient of the proof of hardness for \newkmedian. 
Before we proceed, we provide a definition of Densest $k$-subgraph, 
a well-known \np-hard optimization problem which we employ in our reduction.
\begin{problem}(densest $k$-subgraph --- \dks)
 \label{def:prob_dks}
Given a simple graph $G=(V,E)$ with $|V|=n$ and adjacency matrix $A$, 
and a number $k\in \mathbb N$ with $k < n$, find 
\begin{align*}
\min_\vecx             \quad & \vecx^T\matA\,\vecx, \\
\mbox{subject to}\quad & \vecx\in \{0,1\}^n \mbox{ and }\, \vecx^T\vecones=k.
\end{align*}
\end{problem}

We can now prove the aforementioned lemma.
\begin{proof}[Proof of lemma \ref{lemma:mdp-np-hardness}]

We proceed by reduction from Densest $k$-subgraph (\dks).

Consider an instance of \dks, $G=(V,E)$, $|V|=n$ with adjacency matrix $A$. We define a matrix $\tilde A$ as follows:
\[
\tilde A_{ij} =  
   \begin{cases} 
      A_{ij} & (i,j) \in E \\
      0 & i=j \\
      2 & \mbox{otherwise}.
   \end{cases}
\]
Notice that this matrix is symmetric, the diagonal (and nothing else) is zero, and since $\min_{i\neq j}\tilde A_{ij}=1$ and $\max_{i\neq j}\tilde A_{ij}=2$, for all $i$, $j$, and $\ell$
\[
\tilde A_{ij}\leq \tilde A_{i,\ell}+\tilde A_{\ell,j}.
\]

We want to show that if
\[
\vecx=\argmin_{\substack{\vecx \in \{0,1\}^n\\\vecx^T \vecones=k}}\vecx^T\tilde A\, \vecx, 
\]
then
\[
\vecx=\argmax_{\substack{\vecx \in \{0,1\}^n\\\vecx^T \vecones=k}}\vecx^TA\, \vecx.
\]

We can write 
\begin{align*}
\vecx^T \tilde{A}\, \vecx=\sum_i\sum_j\tilde A_{ij}\mathbb I\{x_i=x_j=1\} = M+2N,
\end{align*}
 where we have defined
\begin{align*}
M=\left|\,\left\{(i,j) \mid x_i=x_j=1 \right\} \cap E\,\right|, & \mbox{ and} 
\\ N=\left|\,\left\{(i,j) \mid x_i=x_j=1 \right\} \cap \bar E\,\right|. & 
\end{align*}
That is, $M$ is the number of pairs in $\vecx$ with a corresponding edge in $G$, and $N$ is the number of pairs in $\vecx$ {\it without} a corresponding edge in $G$. 
Similarly, 
\begin{align*}
\vecx^T A\, \vecx=\sum_i\sum_j A_{ij}\mathbb I\{x_i=x_j=1\} = M.
\end{align*}


Furthermore, note that $M+N=2{k \choose 2}=k^2-k$.

It follows that for all vectors $\vecx\in\{0,1\}^n$, with $\vecx^T\vecones=k$, it is
\begin{align*}
\vecx^T \tilde{A}\,\vecx = -\vecx^T A\,\vecx +2(k^2-k).
\end{align*}

Since $2(k^2-k)$ is a constant, it follows that a vector $\vecx\in\{0,1\}^n$, with  $\vecx^T\vecones=k$,
minimizes $\vecx^T \tilde{A}\,\vecx$ (the \mpd objective)
\emph{if and only if} it maximizes   $\vecx^T {A}\,\vecx$ (the \dks objective).

\end{proof}

\subsection{Approximation guarantees}
Given a set of clients $C$, a set of facilities $F$, and any subset of~facilities $S\subseteq F$, 
we employ the following notation:

\begin{itemize}
\item[--] $f(S)=\sum_{x\in C}d(x,s(x))$, where $s(x)$ is the facility in $S$ assigned to client $x\in C$;

\item[--]  $g(S)=\frac{1}{2}\sum_{x\in S}\sum_{y\in S}d(x,y)$;

\item[--]  $\servedset{\sol}(s)$ is the set of clients served by facility $s$ in $S$.
\end{itemize}

We rely on the following crucial facts. First, a result from the work of Arya et al.~\citep{arya2004local}.
\begin{lemma}[\citep{arya2004local}]
  \label{lem:pairs}
Let $O$ be an optimal set of $k$ facilities and  $S$ an arbitrary set of $k$ facilities. There exists a set of $k$ pairs 
$\Sigma = \{(o_i,s_j)\mid i,j \in [k]\} \subseteq O\times S$  satisfying the following properties:
\begin{enumerate}
\item Every $o\in O$ is considered in exactly one pair.
\item Every $s \in S$ is considered in at most two pairs.
\item We can choose the set of pairs $\Sigma$ such that the following inequality holds:
\begin{equation}
\label{eq:arya_approx}
5f(O) - f(S) \geq \sum_{(o_i,s_j)\in\Sigma}\left ( f(S-s_j+o_i) - f(S) \right ).
\end{equation}
\end{enumerate}
\end{lemma}

Second, by definition of local optimality, 
for a locally optimal solution $\sol=\{s_1, \dots, s_k\}$ and an optimal solution $\optimal=\{o_1, \dots, o_k\}$, we have
\begin{equation}
\label{eq:local_opt}
  f(S-s_j+o_i) + \lambda g(S-s_j+o_i) \geq  f(S) + \lambda g(S),
\end{equation}
for any $i,j \in [k]$. 

We now have the ingredients for the proof of Theorem \ref{theorem_k}, from Section~\ref{sec:algorithms}.

\begin{proof}[Proof of theorem \ref{theorem_k}]
Since $\sol$ and $\sol-s_j+o_i$ differ in one facility only,
we have
\begin{align*}
g(\sol-s_j+o_i)- & g(\sol) \\
 & = \frac{1}{2}\sum_{x \in \sol-s_j+o_i}\sum_{y \in \sol-s_j+o_i}d(x,y) - \frac{1}{2}\sum_{x \in \sol}\sum_{y \in \sol}d(x,y) \\ 
 & = \sum_{\substack{s\in \sol\\ s\neq s_j}}d(o_i,s) - \sum_{s \in \sol}d(s,s_j).
\end{align*}

We consider a set of pairs $\Sigma$ satisfying the properties of lemma \ref{lem:pairs}. Summing the above difference over all $k$ pairs we get
\begin{align*}
\sum_{(o_i, s_j)\in \Sigma} (g(\sol & -s_j+o_i)- g(\sol)) \\
 & = \sum_{(o_i, s_j)\in \Sigma}\left (\sum_{\substack{s\in \sol\\ s\neq s_j}}d(o_i,s) - \sum_{s \in \sol}d(s,s_j) \right ).
\end{align*}

Therefore, summing inequality \eqref{eq:local_opt} over all pairs in $\Sigma$ we get
\begin{align*}
&\sum_{(o_i,s_j)\in\Sigma}\left ( f(\sol-s_j+o_i) - f(\sol) \right )
\\  &+\lambda\sum_{(o_i, s_j)\in \Sigma}\sum_{\substack{s\in \sol\\ s\neq s_j}}d(o_i,s) - \lambda\sum_{(o_i, s_j)\in \Sigma}\sum_{s \in \sol}d(s_j,s)\geq 0,
\end{align*}

and using inequality \eqref{eq:arya_approx} and rearranging we obtain
\begin{align}
  \label{eq:central_ineq}
  5f(\optimal) + \lambda\sum_{(o_i, s_j)\in \Sigma}\sum_{\substack{s\in \sol\\ s\neq s_j}}d(o_i,s) \geq f(\sol) + \lambda\sum_{(o_i, s_j)\in \Sigma}\sum_{s \in \sol}d(s_j,s).
\end{align}

We consider this inequality, and modify it so that it becomes dependent only on factors of $f(\optimal)$, $g(\optimal)$, $f(\sol)$, $g(\sol)$. To accomplish this, we will consider the set of pairs $\Sigma$.

First, let us examine the following quantity:
\[
\sum_{(o_i,s_j)\in \Sigma}\sum_{\substack{s\in \sol\\ s\neq s_j}}d(o_i,s).
\]

For each term $d(o_i, s), s\in \sol, (o_i,s_j)\in \Sigma$, we consider three cases:
\begin{itemize}
\item $s\in \optimal$. Then $d(o_i,s) \leq g(\optimal)$.
\item $s \notin \optimal$ and $s \in N_\optimal(o_i)$. Then $d(o_i,s) \leq f(\optimal)$.
\item $s \notin \optimal$ and $s \notin N_\optimal(o_i)$. Then $d(o_i, s) \leq d(o_i, o_j) + d(o_j, s)$, where $s\in N_\optimal(o_j)$, and thus $d(o_i,s) \leq f(\optimal) + g(\optimal)$.
\end{itemize}
Now, consider the sum \(\sum_{(o_i,s_j)\in \Sigma}\sum_{\substack{s\in \sol\\ s\neq s_j}}d(o_i,s)\)
and observe that
\begin{itemize}
\item for every $i$, $o_i$ is in $k-1$ terms of the sum;
\item for every $j$, $s_j$ is in at most $k$ terms of the sum.
\end{itemize}

Thus, 
\begin{align}
  \label{eq:ineq_kmedianterm}
  \sum_{(o_i,s_j)\in \Sigma}\sum_{\substack{s\in \sol\\ s\neq s_j}}d(o_i,s) \leq  k(g(\optimal) + f(\optimal)).
\end{align}

  We now examine the quantity 
\begin{equation}
\label{eq:sum_rhs_k}
\sum_{(o_i,s_j)\in \Sigma}\sum_{s \in \sol}d(s_j,s).
\end{equation}

In particular, we will show that
\begin{align}
  \label{eq:ineq_reconterm}
 2\sum_{(o_i, s_j)\in \Sigma}\sum_{s \in \sol}d(s_j,s) \geq g(\sol).
\end{align}

We consider the set $P\subseteq \sol$ of facilities in some swap, i.e.,
 $P=\{$%
$s\in \sol$ such that $(o,s)\in \Sigma$ for some $o\in \optimal$%
 $\}$. 
For any facility  $s\in P$, 
note that all the entries from $g(\sol)$ that involve terms $d(s,s')$ are present in
expression~(\ref{eq:sum_rhs_k}), and so we can trivially bound them.

Thus, we can assume that $k$ is even and $|P|=k/2$, that is, all
facilities in $\sol$ either are in two swaps from $\Sigma$, or in no
swaps at all.
We number the facilities in $P$ $1$ through $k/2$, and the rest
$k/2+1$ through $k$.

We bound each term of $g(\sol)$
corresponding to facilities $i,j$ not in $P$, $i<j$ as
\[
d(i,j) \leq d(i,j-i) + d(j-i,j).
\]
Note we can bound all such terms this way, with each term in the upper
bound appearing at most twice, and being part of the sum in expression~(\ref{eq:sum_rhs_k}). Thus, we get inequality~(\ref{eq:ineq_reconterm}).

Now, consider inequality \eqref{eq:central_ineq}, and apply inequalities \eqref{eq:ineq_kmedianterm} and \eqref{eq:ineq_reconterm}. We obtain
\begin{align}
  2\left(5f(\optimal) + \lambda k(g(\optimal) + f(\optimal))\right) \geq f(\sol) + \lambda g(\sol).
\end{align}

\end{proof}

\begin{proof}[Proof of theorem \ref{the:approx}]
We consider inequality \eqref{eq:central_ineq}, and modify it so that it becomes dependent only on factors of $f(O)$, $g(O)$, $f(S)$, $g(S)$. To accomplish this, we will consider the set of pairs $\Sigma$.

First, let us examine the following quantity:
\[
\sum_{(o_i,s_j)\in \Sigma}\sum_{\substack{s\in S\\ s\neq s_j}}d(o_i,s).
\]

By the triangle inequality, we have that the sum corresponding to each pair $(o_i,s_j)\in \Sigma$ is bounded as follows:
\begin{align}
\label{eq:tri_bound}
\sum_{\substack{s\in S\\ s\neq  s_j}}d(o_i,s) &\leq \sum_{\substack{s\in S\\ s\neq  s_j}}d(o_i,o_{h_s}) + d(o_{h_s},x) + d(x,s),
\end{align}
where $o_{h_s}$ is such that $\servedset{\optimal}(o_{h_s})\cap \servedset{\sol}(s)\neq \emptyset$ and $x \in \servedset{\optimal}(o_{h_s})\cap \servedset{\sol}(s)$. If $\servedset{\sol}(s)\subseteq \servedset{\optimal}(o_i)$, we consider  $d(o_i,s) \leq d(o_i,o_{h_s}) + d(o_i,x) + d(x,s)$, choosing $o_{h_s}$ as described below.


For notational convenience, we define 
\begin{align*}
\sigma(o_i) &= \sum_{\substack{s\in S\\ s\neq  s_j }}d(o_i,o_{h_s}) + d(o_{h_s},x) + d(x,s),
\end{align*}
for each $(o_i,s_j)\in \Sigma$ ---
note that each $o_i$ appears exactly
once in the pairs in the set $\Sigma$, so the pair $(o_i,s_j)$ is
uniquely determined by $o_i$.

We want to choose the entries of inequality (\ref{eq:tri_bound}) such that
\begin{itemize}
\item[$-$] every term of the form $d(o_i,o_{h_s})$ appears at most twice; 
\item[$-$] every term of the form $d(o_{h_s},x)$ appears once;  
\item[$-$] every term of the form $d(x,s)$ appears once. 
\end{itemize} 

To achieve this, we need a set of replacements such that for each $\sigma(o_i)$, the $k-1$ corresponding replacements contain $k-1$ distinct entries of the forms $d(o_i,o_{h_s})$, $d(o_{h_s}, x)$, $d(x, s)$. 

We define a function $\mu$ that maps each $x\in C$ to a pair $(o_i,s_j)\in \optimal\times\sol$ 
such that $x\in \servedset{\sol}(s_j)\cap \servedset{\optimal}(o_i)$.  
Since $|\servedset{\sol}(s_i)|\geq k$,
  $|\servedset{\optimal}(o_i)|\geq k$ and $|\servedset{\sol}(s)| =
|\servedset{\optimal}(o)|$ for all $s\in \sol, o \in \optimal$,  
we can choose a subset $\tilde C \subseteq C$ of $k^2$ points such that each $s \in
  S$, as well as each $o \in O$, appears in $k$ of the $k^2$ pairs
  associated to the chosen points.

We now argue that said choice is indeed possible.
Consider a bipartite multigraph, where the vertices of each partition
correspond to the facilities in $S$ and $O$, respectively. We add an
edge between $s$ and $o$ for each client served by the two facilities
in the corresponding solution.
That we can pick these $k^2$ clients satisfactorily follows from the
fact that every regular bipartite multigraph is
1-factorable \cite{akiyama2011factors}.

We can now conclude that 
\begin{itemize}
\item[$-$] every element of the form $d(o_i, o_j)$ can only appear either in $\sigma(o_i)$ or in $\sigma(o_j)$. Since for all $\sigma(o_i)$, the entry $d(o_i, o_j)$ is unique, each $d(o_i, o_j)$ appears at most twice;
\item[$-$] for any element of the form $d(o, x), o\in O$, $x$ is unique;
\item[$-$] for any element of the form $d(s, x), s\in S$, $x$ is unique.
\end{itemize}

We have established the following inequality
\begin{equation}
  \label{eq:ineq_g}
\sum_{(o_i, s_j)\in \Sigma}\sum_{\substack{s\in S\\ s\neq  s_j}}d(o_i,s) \leq 2g(O) + f(O) + f(S).
\end{equation}

Considering each cluster in both $S$ and $O$ to have at least $\lceil 2\lambda\rceil k$ points, we can extend the result to account for $\lambda$. To do this, we repeat the above choice of clients $\lceil 2\lambda \rceil$ times  and obtain
\begin{equation}
  \label{eq:ineq_glambda}
2\lambda \sum_{(o_i, s_j)\in \Sigma}\sum_{\substack{s\in S\\ s\neq  s_j}}d(o_i,s) \leq 4\lambda g(O) + f(O) + f(S).
\end{equation}
To see this, note that for every entry of the form $d(o, x)$ or $d(s, x)$, $x$ can still be chosen to be unique.

Now, consider inequality \eqref{eq:central_ineq} again. Multiplying it by $2$ we get
\begin{align*}
  10f(O) + 2\lambda\sum_{(o_i, s_j)\in \Sigma}\sum_{\substack{s\in S\\ s\neq s_j}}d(o_i,s) \geq 2f(S) + 2\lambda\sum_{(o_i, s_j)\in \Sigma}\sum_{s \in S}d(s_j,s).
\end{align*}
By inequalities \eqref{eq:ineq_glambda} and \eqref{eq:ineq_reconterm} we obtain the following:
\begin{align*}
  & 10f(O) + 4\lambda g(O) + f(O) + f(S) \geq 2f(S) + \lambda g(S)
  \\ & \Rightarrow  10f(O) + 4\lambda g(O) + f(O) \geq f(S) + \lambda g(S)
  \\ & \Rightarrow  11f(O) + 4\lambda g(O) \geq f(S) + \lambda g(S).
\end{align*}
Therefore, 
\begin{align*}
\max\{11,4\lambda\}cost(O) \geq  cost(S). \tag* \qedhere
\end{align*}

\end{proof}

\clearpage
\bibliographystyle{ACM-Reference-Format}
\bibliography{citations}
\balance 

\end{document}